\newtheorem{proposition}{Proposition}
\begin{document}

\title{Secure Communication Via a Wireless Energy Harvesting Untrusted Relay}

\author{Sanket S. Kalamkar and~Adrish Banerjee
\thanks{The authors are with the Department of Electrical Engineering, Indian Institute of Technology Kanpur, India. (e-mail:  $\lbrace$kalamkar, adrish$\rbrace$@iitk.ac.in).}
}

\maketitle  

\begin{abstract}
The broadcast nature of the wireless medium allows unintended users to eavesdrop the confidential information transmission. In this regard, we investigate the problem of secure communication between a source and a destination via a wireless energy harvesting untrusted node which acts as a helper to relay the information; however, the source and destination nodes wish to keep the information confidential from the relay node. To realize the positive secrecy rate, we use destination-assisted jamming. Being an energy-starved node, the untrusted relay harvests energy from the received radio frequency signals, which include the source's information signal and the destination's jamming signal. Thus, we utilize the jamming signal efficiently by leveraging it as a useful energy source. At the relay, to enable energy harvesting and information processing, we adopt power splitting (PS) and time switching (TS) policies. To evaluate the secrecy performance of this proposed scenario, we derive analytical expressions for two important metrics, viz., the secrecy outage probability and the ergodic secrecy rate. The numerical analysis reveals the design insights into the effects of different system parameters like power splitting ratio, energy harvesting time, target secrecy rate, transmit signal-to-noise ratio (SNR), relay location, and energy conversion efficiency factor, on the secrecy performance. Specifically, the PS policy achieves better optimal secrecy outage probability and optimal ergodic secrecy rate than that of the TS policy at higher target secrecy rate and transmit SNR, respectively. 
\end{abstract}
\begin{IEEEkeywords}
Destination-assisted jamming, ergodic secrecy rate, secrecy outage probability, untrusted relay, wireless energy harvesting.
\end{IEEEkeywords}
\IEEEpeerreviewmaketitle
\section{Introduction}
\subsection{Wireless Energy Harvesting and Cooperative Relaying}
\PARstart{E}{nergy} harvesting is a popular source of energy to power wireless devices~\cite{paradiso,visser,lu}. It holds the potential to prolong the lifetime of energy-constrained nodes and simultaneously avoids the frequent recharging and replacement of batteries, which otherwise would be inconvenient or unacceptable (e.g., medical devices implanted inside a human body). Besides harvesting energy from natural sources like solar, thermal, and wind, the radio frequency (RF) signals in the surrounding wireless environment is a viable source of energy. Exploiting that RF signals can carry both energy and information together, \cite{varshney,grover,rui2} have proposed the simultaneous wireless energy harvesting and information transfer from the same received RF signals. Since it is difficult for a receiver to harvest energy and process information from the same signal, two practical policies for the wireless energy harvesting and information processing are proposed in~\cite{rui2,rui4,nasir}. One policy is the power splitting policy where the receiver splits the received power between energy harvesting and information processing. The second policy involves time switching which divides the time between energy harvesting and information processing.

Such simultaneous energy harvesting and information processing has an application in cooperative relaying~\cite{nasir,ishibashi,krik3,poor,nasir2,hechen,ding_tvt,sanket,gu1}. Using the broadcast nature of the wireless medium, the source transmits the information to an intermediate node, that retransmits it to the destination. In this setup, the relay harvests energy from the received RF information signal and uses it further to forward the information to the destination. The energy harvesting along with the information transfer can prolong the lifetime of a relay, which in turn, facilitates the information cooperation.

\subsection{Physical-Layer Security and Untrusted Relaying}
Although the broadcast nature of the wireless medium has facilitated the cooperative communication, it has also allowed unintended nodes, also known as eavesdroppers, to hear the confidential information transmission between the source and the intended destination via a relay, leading to the insecure communication. Traditional approaches to achieve the secure communication include upper-layer cryptographic techniques which require intensive key distribution and management. Unlike this paradigm, the physical-layer information-theoretic security achieves the secure communication by exploiting the nature of the wireless channel. In this regard, Wyner introduced the wiretap channel and showed that the perfect secure communication was possible without relying on private keys~\cite{wyner}.

As to the cooperative relaying, the works in~\cite{krik1,ding1,liu,vo,ding3,dong,jeong2,liu6} have studied the physical-layer security in the presence of external eavesdroppers that are different from the relay and try to intercept the source-relay and/or relay-destination communications. Even in the absence of external eavesdroppers, the secure communication between source and destination may still be a concern, as one may wish to keep the source-destination communication secret from the relay itself despite its cooperation in forwarding the information~\cite{oohama}.
In this case, the relay is considered as an eavesdropper. The model of untrusted relay has practical applications in defence, financial, and government networks, where all users do not have the same rights to access the information. Also, if the relay belongs to a different network, it may not have the privilege to access the data as that of the source and the destination. 

In~\cite{he}, the authors show that even the communication via an untrusted relay can be beneficial for the relay channel with orthogonal components. The works in~\cite{he3,he1} show that the positive secrecy rate is achievable with the destination-assisted jamming, where the destination sends jamming signals during the source-relay communication. The references~\cite{li1,li2,wang,liu1,ju,park2015} investigate the information-theoretic security performance for amplify-and-forward (AF) relays under the fading channel and destination-assisted jamming. The work in~\cite{huang} studies the secrecy outage probability performance of the communication via an untrusted multi-antenna relay. 
In~\cite{zhang1}, the authors advocate the use of friendly jammers to secure the communication via an untrusted relay. To achieve secure as well as spectral efficient communication, the authors in \cite{khoda} propose link adaptation and relay assignment. With distributed beamforming and opportunistic relaying, the reference~\cite{kim} studies the capacity scaling and diversity order for secure relaying. In~\cite{ryu}, the authors examine the secure relay-assisted communication, where legitimate users, rather considering the relay completely untrusted, have a degree of trust about the relay.\vspace*{-2mm}

\subsection{Wireless Energy Harvesting with Physical-Layer Security}
Recently, with wireless energy harvesting, a few works have studied the physical-layer security in the presence of external eavesdroppers for different scenarios like point-to-point communication with a single antenna~\cite{xing,biao} and multiple antennas~\cite{schober,khandaker,feng,shi}, and the cooperative communication via a relay~\cite{quan,xing2,xing2015,chen2014}. In~\cite{quan}, in the presence of the external energy harvesting receiver, the authors study the secure relay beamforming with simultaneous wireless information and energy transfer. The work in~\cite{xing2} investigates the secrecy performance for an AF relay wiretap channel when the external helpers, who act as jammers to the eavesdropper, harvest energy from the source's transmission. In the presence of an external eavesdropper, in~\cite{xing2015}, authors have studied the secure communication between a source and a destination via multiple energy harvesting relays; while the work in~\cite{chen2014} investigates the secrecy performance of the source-destination communication via an energy harvesting relay with multiple antennas. However, the works in~\cite{quan,xing2,xing2015,chen2014} assume the relay to be trusted, and external eavesdroppers attempt to intercept the relay-assisted communication between the source and the destination. Also, the works on untrusted relay till now have assumed that the conventional energy source, such as battery, powers the relay (see, e.g., \cite{oohama,he,he3,he1,li1,li2,wang,liu1,ju,huang,zhang1,khoda,kim,ryu}).

In this work, we address the problem of secure communication via an energy harvesting untrusted relay. When an untrusted relay harvests energy from the received RF signals, the jamming signal can act as a potential energy source besides its original purpose of realizing the secure communication via untrusted relay. This allows us to use the jamming signal efficiently.\vspace*{-2mm}

\subsection{Contributions and Key Results}
In this paper, we investigate the secrecy performance of a two-hop communication between a source and a destination, where the source uses an AF wireless energy harvesting untrusted relay to forward the confidential information to the legitimate destination. To keep the information secret from the relay, we consider the destination-assisted jamming. The relay harvests energy from received RF signals, which include the information signal from the source and the jamming signal from the destination. We use power splitting (PS) and time switching (TS) receiver architectures~\cite{rui2} at the relay to facilitate the energy harvesting and information processing. We summarize the main contributions and key results below.
\begin{itemize}
\item With the jamming signal leveraged as a useful energy source under both PS and TS policies, for an energy harvesting AF relay, we derive analytical expressions for two important measures of secrecy performance---the secrecy outage probability and the ergodic secrecy rate.
\item We further compare PS and TS policies, where we show that, at higher target secrecy rate and transmit signal-to-noise ratio (SNR), PS policy achieves lower secrecy outage probability and higher ergodic secrecy rate, respectively.
\item The numerical results also show that the power splitting ratio in PS policy and energy harvesting time in TS policy have both constructive and destructive effects on the secure communication between source and destination. Thus, the optimal power splitting ratio in PS policy and the optimal energy harvesting time in TS policy that maximize the secrecy performance do exist. 
\item In addition, for both PS and TS policies, the numerical analysis shows that, the optimal secrecy performance is achieved when the relay is located closer to the destination than to the source. This is in contrast with the case where the wireless energy harvesting relay is considered to be trusted, and the optimum location of the relay is closer to the source. 
\end{itemize}\vspace*{-2mm}
\begin{figure}
\centering
\includegraphics[scale=0.33]{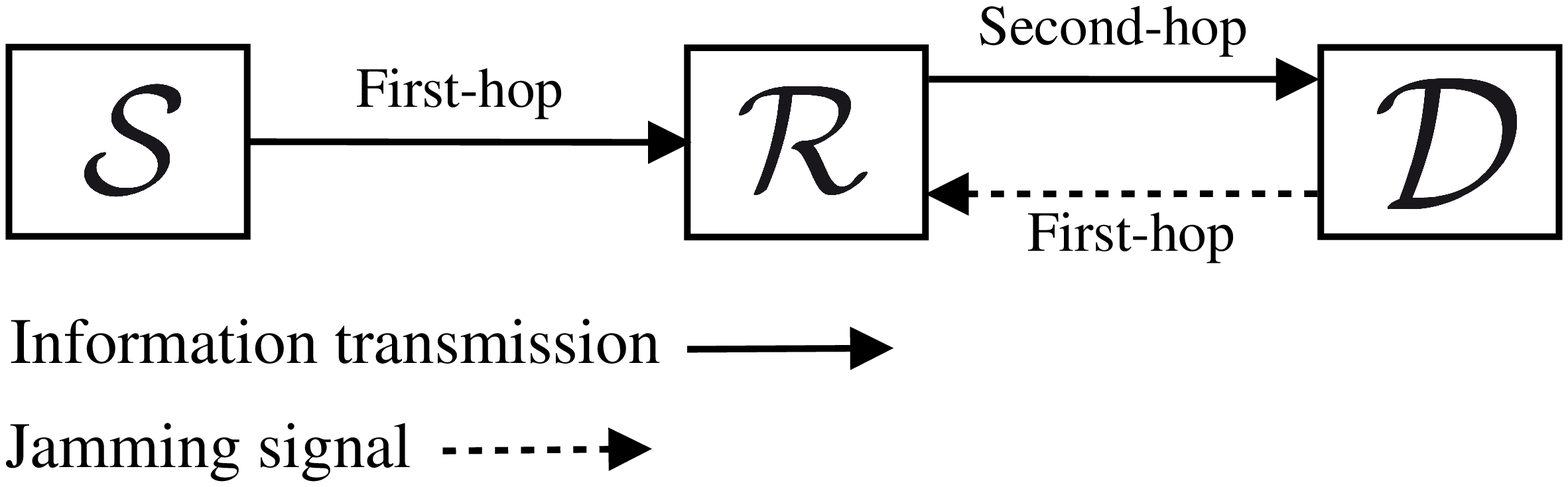}\vspace*{-2mm}
\caption{System Model for the secure communication between a source ($\mathcal{S}$) and a destination ($\mathcal{D}$) via an energy harvesting untrusted relay ($\mathcal{R}$) with destination-assisted jamming.}\vspace*{-4mm}
\label{fig:sys}
\end{figure}
\subsection{Organization of the Paper}
We structure the rest of the paper as follows. Section~\ref{sec:sys} describes the system model for the two-hop secure communication via an energy harvesting untrusted relay using the destination-assisted jamming. In Sections~\ref{sec:PS} and \ref{sec:TS}, utilizing the jamming signal as a useful energy source, we derive analytical expressions for the secrecy outage probability and the ergodic secrecy rate for PS policy and TS policy based relaying. We present numerical results in Section~\ref{sec:results}, where we also discuss the effects of different system parameters on the secrecy performance of the relay-assisted communication and obtain various design insights. Finally, we provide concluding remarks in Section~\ref{sec:conc}.\vspace*{-1mm}

\section{System Model}
\label{sec:sys}

\subsection{Destination-Assisted Jamming and Channel Model}
As shown in Fig.~\ref{fig:sys}, a source ($\mathcal{S}$) communicates with a destination ($\mathcal{D}$) via an amplify-and-forward (AF) energy harvesting relay ($\mathcal{R}$). Despite relay's information cooperation, the source and destination nodes wish to keep the information secret from the relay. To maintain the confidentiality of the source information, the destination sends a jamming signal to the relay when source transmits the information to the relay. Each node operates in a half-duplex mode and has a single antenna. The direct link between $\mathcal{S}$ and $\mathcal{D}$ is unavailable.\footnote{Since the destination operates in a half-duplex mode and sends the jamming signal to the relay during the source's transmission, it cannot receive the information from the source.} Let us denote the coefficient of the channel between nodes $i$ and $j$ by $h_{ij}$. We consider a quasi-static block-fading Rayleigh channel between two nodes, as in~\cite{nasir,nasir2,li1,li2}. That is, the channel remains constant over a slot-duration of $T$ during which $\mathcal{S}$ transmits to $\mathcal{D}$ via $\mathcal{R}$ and changes independently from one slot to another. The channel power gain is given by $|h_{ij}|^{2}$, which has exponential distribution with mean $\lambda_{ij}$, i.e.,
\begin{equation}
f_{|h_{ij}|^2}(x) = \frac{1}{\lambda_{ij}}\exp\left({-\dfrac{x}{\lambda_{ij}}}\right), \,\,\, x \geq 0,
\end{equation}
where $f_{|h_{ij}|^2}(x)$ is the probability density function of random variable $|h_{ij}|^2$. We assume the channel between $\mathcal{R}$ and $\mathcal{D}$ reciprocal, as in~\cite{li1,huang,li2,wang,liu1}, i.e., $h_{\mathcal{R}\mathcal{D}} = h_{\mathcal{D}\mathcal{R}}$. In this work, the source is assumed to have no channel state information (CSI), while the CSI of $\mathcal{S}-\mathcal{R}$ and $\mathcal{R}-\mathcal{D}$ channels are available at the relay and destination, respectively~\cite{nasir, nasir2,ishibashi,krik3}.

\subsection{Energy Harvesting and Information Processing Model}
\label{sec:energy_act}
The untrusted relay harvests energy from the received RF signals which it uses to forward the source's information to the destination. To activate the energy harvesting circuitry at the relay, the received power must exceed the minimum threshold power $\theta_H$~\cite{lu,guo2015,duong2015}.\footnote{The threshold $\theta_H$ is usually between $-30~\mathrm{dBm}$ to $-10~\mathrm{dBm}$, depending on various factors like channel conditions, frequency of the received RF signals, and
energy harvesting circuitry type (linear, non-linear, tunable, etc.)\cite{lu}.} We assume that the relay has no other energy source and uses the harvested energy completely for the transmission as the power consumed by the relay's transmit/receive circuitry is negligible compared to the power required for the transmission~\cite{nasir,nasir2}.

We adopt following two different receiver architectures based policies at the relay to separately harvest energy from the received RF signals and process the information~\cite{rui4}. 
\begin{itemize}
\item[1.] Power splitting (PS) policy: The relay uses a part of the received power to harvest the energy and the remaining part for the information processing. 
\item[2.] Time switching (TS) policy: The relay switches between the energy harvesting and the information processing. That is, the relay uses a fraction of the time of a slot to harvest the energy and the remaining time for the information processing and relaying.
\end{itemize}
Note that the relay may attempt to decode the source information with the power used for the information processing.
\section{Power Splitting Policy Based Relaying}
\label{sec:PS}
\begin{figure}
\centering
\includegraphics[scale=0.16]{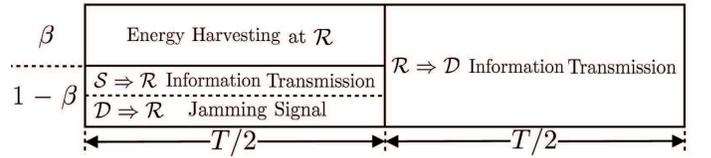}
\caption{Power splitting policy for the secure communication via an energy harvesting untrusted relay.}
\label{fig:PSP}
\end{figure}
Fig.~{\ref{fig:PSP}} shows the power splitting (PS) policy based relaying protocol, where the source-to-destination communication happens in a slot of duration $T$. Two phases of equal duration $T/2$ divide the slot. In the first phase, the source transmits information to the relay with power $P_{\mathcal{S}}$. At the same time, the destination sends a jamming signal with power $P_{\mathcal{D}}$ to the relay to maintain the confidentiality of the source information from the relay. The relay uses a fraction $\beta$ of the received power for energy harvesting and the remaining $(1-\beta)$ portion for information processing, where $0 \leq \beta \leq 1$. Using the harvested energy, in the second phase, the relay forwards the received information to destination after amplification.

\subsection{Energy Harvesting at Relay}
In the aforementioned PS policy, the relay harvests energy $E_{H}$ given as
\begin{equation}
E_{H} = \eta \beta \left(P_{\mathcal{S}} |h_{\mathcal{S}\mathcal{R}}|^2 + P_{\mathcal{D}} |h_{\mathcal{D}\mathcal{R}}|^2\right)(T/2),
\label{eq:harv_energy_ps}
\end{equation}
where $\eta$ is the energy conversion efficiency factor with $0 < \eta \leq 1$, which is dependent on the energy harvesting circuitry of the relay. The terms $P_{\mathcal{S}} |h_{\mathcal{S}\mathcal{R}}|^2$ and $P_{\mathcal{D}} |h_{\mathcal{D}\mathcal{R}}|^2$ in \eqref{eq:harv_energy_ps} denote the power received at the relay due to the information signal from the source and the jamming signal from the destination, respectively. In the second phase of duration $T/2$, the relay's transmit power to forward the information to destination is given as
\begin{equation}
P_{H} = \frac{E_{H}}{T/2} =  \eta \beta \left(P_{\mathcal{S}} |h_{\mathcal{S}\mathcal{R}}|^2 + P_{\mathcal{D}} |h_{\mathcal{D}\mathcal{R}}|^2\right).
\label{eq:harv_power_ps}
\end{equation}
\subsection{Information Processing and Relaying Protocol}
In phase one, the received signal $y_{\mathcal{R}}$ for the information processing at the relay is given by
\begin{equation}
y_{\mathcal{R}} = \sqrt{(1-\beta)P_{\mathcal{S}}}h_{\mathcal{S}\mathcal{R}}x_{\mathcal{S}} +\sqrt{(1-\beta)P_{\mathcal{D}}}h_{\mathcal{D}\mathcal{R}}x_{\mathcal{D}} + n_{\mathcal{R}},
\label{eq:rec_rel}
\end{equation}
where $x_{\mathcal{S}}$ is the source message with unit power, $x_{\mathcal{D}}$ is the unit power jamming signal sent by the destination, and $n_{\mathcal{R}}$ is the additive white Gaussian noise (AWGN) at the relay. We assume that the power splitting does not affect the noise power~\cite{poor,chua}. Based on the received signal $y_{\mathcal{R}}$ in \eqref{eq:rec_rel}, the relay may attempt to decode the source message $x_{\mathcal{S}}$. We can write the signal-to-noise ratio (SNR) at the relay as
\begin{equation}
\gamma_{\mathcal{R}} = \frac{(1-\beta)P_{\mathcal{S}}|h_{\mathcal{S}\mathcal{R}}|^{2}}{(1-\beta)P_{\mathcal{D}}|h_{\mathcal{D}\mathcal{R}}|^{2} + N_0},
\label{eq:snr_rel}
\end{equation}
where $N_0$ is the noise power of AWGN $n_{\mathcal{R}}$.

In phase two, the relay amplifies the received signal $y_{\mathcal{R}}$ by a factor $\xi$ based on its power constraint and forwards the resultant signal $x_{\mathcal{R}}$ to the destination, which is  given as
\begin{align}
x_{\mathcal{R}} &= \xi y_{\mathcal{R}} \label{eq:ampl11} \\
& =\sqrt{\frac{P_{H}}{(1-\beta)P_{\mathcal{S}}|h_{\mathcal{S}\mathcal{R}}|^{2}+(1-\beta)P_{\mathcal{D}}|h_{\mathcal{D}\mathcal{R}}|^{2} + N_0}}y_{\mathcal{R}}.
\label{eq:ampl}
\end{align} 
Then, we substitute~\eqref{eq:rec_rel} in~\eqref{eq:ampl11} and then use~\eqref{eq:ampl11} to write the received signal $y'_{\mathcal{D}}$ at the destination as
\begin{align}
y'_{\mathcal{D}} &= h_{\mathcal{R}\mathcal{D}}x_{\mathcal{R}} + n_{\mathcal{D}} \nonumber\\
& = \xi\sqrt{(1-\beta)P_{\mathcal{S}}}h_{\mathcal{S}\mathcal{R}}h_{\mathcal{R}\mathcal{D}}x_{\mathcal{S}} \nonumber \\
& + \xi\sqrt{(1-\beta)P_{\mathcal{D}}}h_{\mathcal{R}\mathcal{D}}h_{\mathcal{D}\mathcal{R}}x_{\mathcal{D}} 
+ \xi h_{\mathcal{R}\mathcal{D}} n_{\mathcal{R}} + n_{\mathcal{D}},
\label{eq:rec_des}
\end{align}
where $n_{\mathcal{D}}$ is the AWGN at the destination with power $N_0$. Since $x_{\mathcal{D}}$ is the jamming signal sent by the destination itself to the relay in phase one, the destination can remove the term  $\xi\sqrt{(1-\beta)P_{\mathcal{D}}}h_{\mathcal{R}\mathcal{D}}h_{\mathcal{D}\mathcal{R}}x_{\mathcal{D}}$ from \eqref{eq:rec_des} and decode the source information from the rest of the received signal.\footnote{In the case of channel estimation errors, the destination will have inaccurate knowledge of the channel gain on the relay-destination link, due to which it will not be able to cancel the jamming signal completely, causing self-interference. This, in turn, will reduce the received SNR at the destination, deteriorating the secrecy performance. Given the scope of this paper is to analyze the untrusted nature of an energy harvesting relay on the source-destination communication, we restrict ourselves to study the secrecy performance without channel estimation errors.} Thus, the resultant received signal $y_{\mathcal{D}}$ at the destination becomes
\begin{equation}
y_{\mathcal{D}} = \xi\sqrt{(1-\beta)P_{\mathcal{S}}}h_{\mathcal{S}\mathcal{R}}h_{\mathcal{R}\mathcal{D}}x_{\mathcal{S}} + \xi h_{\mathcal{R}\mathcal{D}} n_{\mathcal{R}} + n_{\mathcal{D}}.
\label{eq:rec_des1}
\end{equation} 
Finally, substituting $P_{H}$ from \eqref{eq:harv_power_ps} in~\eqref{eq:ampl}, and then using $\xi$ from~\eqref{eq:ampl} in~\eqref{eq:rec_des1}, we get
\begin{align}
y_{\mathcal{D}} &= \frac{\sqrt{\eta \beta(1-\beta)P_{\mathcal{S}} \left(P_{\mathcal{S}} |h_{\mathcal{S}\mathcal{R}}|^2 + P_{\mathcal{D}} |h_{\mathcal{D}\mathcal{R}}|^2\right)}h_{\mathcal{S}\mathcal{R}}h_{\mathcal{R}\mathcal{D}}x_{\mathcal{S}}}{\sqrt{(1-\beta)P_{\mathcal{S}}|h_{\mathcal{S}\mathcal{R}}|^{2}+(1-\beta)P_{\mathcal{D}}|h_{\mathcal{D}\mathcal{R}}|^{2} + N_0}} \nonumber \\
& + \frac{\sqrt{\eta \beta \left(P_{\mathcal{S}} |h_{\mathcal{S}\mathcal{R}}|^2 + P_{\mathcal{D}} |h_{\mathcal{D}\mathcal{R}}|^2\right)}h_{\mathcal{R}\mathcal{D}}n_{\mathcal{R}}}{\sqrt{(1-\beta)P_{\mathcal{S}}|h_{\mathcal{S}\mathcal{R}}|^{2}+(1-\beta)P_{\mathcal{D}}|h_{\mathcal{D}\mathcal{R}}|^{2} + N_0}} + n_{\mathcal{D}}.
\label{eq:rec_des2}
\end{align}
The first term on the right hand side of \eqref{eq:rec_des2} represents the signal part, while the second and third terms correspond to the total received noise at the destination. Then, the SNR at the destination can be written as
\begin{align}
\gamma_{\mathcal{D}} = \frac{\eta \beta(1-\beta)P_{\mathcal{S}} |h_{\mathcal{S}\mathcal{R}}|^{2}|h_{\mathcal{R}\mathcal{D}}|^{2}}{ \eta \beta |h_{\mathcal{R}\mathcal{D}}|^2N_0 + N_0 (1-\beta) + \frac{N_0^2}{\left(P_{\mathcal{S}} |h_{\mathcal{S}\mathcal{R}}|^2 + P_{\mathcal{D}} |h_{\mathcal{D}\mathcal{R}}|^2\right)}}.
\label{eq:snr_des}
\end{align}

\subsection{Secure Communication via an Untrusted Relay}
When the relay is considered untrusted, we can write the instantaneous secrecy rate $R_{\mathrm{sec}}$ of the relay-assisted communication as~\cite{bloch}
\begin{align}
R_{\mathrm{sec}} &= \frac{1}{2}\left[\log_2\left(1 +\gamma_{\mathcal{D}}\right) - \log_2\left(1 +\gamma_{\mathcal{R}}\right)\right]^{+} \nonumber \\
&=\frac{1}{2}\left[\log_2\left(\frac{1 + \gamma_{\mathcal{D}}}{1 + \gamma_{\mathcal{R}}}\right)\right]^{+},
\label{eq:sec_cap}
\end{align}
where $[x]^{+} = \max(x, 0)$. The factor $\frac{1}{2}$ represents the effective communication time between the source and the destination. For the rest of the Section~\ref{sec:PS}, we assume $P_{\mathcal{S}} = P_{\mathcal{D}} = P$ for analytical tractability.

\subsubsection{Secrecy Outage Probability}
The secrecy outage probability is an important measure of the secrecy performance. It allows us to determine the probability of attaining a target secrecy rate.
Given the energy harvesting circuitry of the relay is active, we can express the secrecy outage probability as~\cite{bloch}
\begin{align}
P_{\mathrm{out}} = \mathbb{P}\left( R_{\mathrm{sec}} < R_{\mathrm{th}}\right),
\label{eq:outt}
\end{align}
where $\mathbb{P}(\cdot)$ denotes the probability, $R_{\mathrm{sec}}$ is the instantaneous secrecy rate given by \eqref{eq:sec_cap}, and $R_{\mathrm{th}}$ is the target secrecy rate. Then, substituting $\gamma_{\mathcal{R}}$ from \eqref{eq:snr_rel} and $\gamma_{\mathcal{D}}$ from \eqref{eq:snr_des}, we can rewrite \eqref{eq:outt} as
\vspace*{-3mm}

{{\small
\begin{align}
\! P_{\mathrm{out}} = \mathbb{P}\!  \left(\!\!\frac{1 +  \frac{\eta \beta(1-\beta)P|h_{\mathcal{S}\mathcal{R}}|^{2}|h_{\mathcal{R}\mathcal{D}}|^{2}}{\left(N_0\eta \beta |h_{\mathcal{R}\mathcal{D}}|^{2} + N_0(1-\beta)\right) + \frac{N_0^2}{P(|h_{\mathcal{S}\mathcal{R}}|^{2} + |h_{\mathcal{R}\mathcal{D}}|^{2})}}}{1+ \frac{(1-\beta)P|h_{\mathcal{S}\mathcal{R}}|^{2}}{(1-\beta)P|h_{\mathcal{R}\mathcal{D}}|^{2} + N_0}} <  2^{2R_{\mathrm{th}}}\!\!\right)\!\!\!.
\label{eq:out1}
\end{align}}}\vspace*{-3mm}

\noindent We can further express the secrecy outage probability in~\eqref{eq:out1} analytically as given in Proposition~\ref{prop:out_ex}.
\begin{proposition}
\label{prop:out_ex}
The secrecy outage probability for PS policy can be approximately expressed as

\begin{align}
P_{\mathrm{out}} &\approx 1 - \frac{1}{\lambda_{\mathcal{RD}}}\int_{\theta_1}^{\infty}\exp\left(-\frac{\delta-1}{\nu(x)\lambda_{\mathcal{SR}}} - \frac{x}{\lambda_{\mathcal{RD}}}\right)\,\mathrm{d}x, 
\label{eq:out_fin}
\end{align}
where $\delta = 2^{2R_{\mathrm{th}}}$ with
\begin{subequations}
\begin{equation}
\theta_1 = \frac{\frac{\delta-1}{1-\beta} + \sqrt{\left(\frac{\delta-1}{1-\beta}\right)^{2} + \frac{4 \delta P}{\eta \beta N_0}}}{2(P/N_0)},
\end{equation} and
\begin{equation}
\nu(x) = (1-\beta)\left(\frac{\eta \beta P x}{N_0 \left(\eta \beta x + (1-\beta)\right)}-\frac{P \delta}{P(1-\beta)x + N_0}\right).
\end{equation} 
\end{subequations}
\end{proposition}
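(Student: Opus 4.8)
The plan is to reduce the outage event already displayed in \eqref{eq:out1} to a single inequality in two independent channel power gains and then average over them. Write $X:=|h_{\mathcal{SR}}|^2$ and $Y:=|h_{\mathcal{RD}}|^2$; these are independent and exponentially distributed with means $\lambda_{\mathcal{SR}}$ and $\lambda_{\mathcal{RD}}$. The word \emph{approximately} in the statement refers to one high-SNR simplification: in the denominator of $\gamma_{\mathcal{D}}$ in \eqref{eq:snr_des} the term $N_0^2/(P(X+Y))$ is of order $N_0^2/P$, while the other two terms are of order $N_0$, so for $P/N_0$ large it is negligible and is dropped, yielding $\gamma_{\mathcal{D}}\approx \frac{\eta\beta(1-\beta)P X Y}{N_0(\eta\beta Y+(1-\beta))}$. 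I would introduce this approximation at the outset and make no further ones.

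I would then rewrite the event. Since the target secrecy rate is positive we have $\delta=2^{2R_{\mathrm{th}}}>1$, and (as already used in \eqref{eq:out1}) the event in \eqref{eq:outt} is $\frac{1+\gamma_{\mathcal{D}}}{1+\gamma_{\mathcal{R}}}<\delta$, which is equivalent to $\gamma_{\mathcal{D}}-\delta\gamma_{\mathcal{R}}<\delta-1$. Substituting $\gamma_{\mathcal{R}}$ from \eqref{eq:snr_rel} together with the approximated $\gamma_{\mathcal{D}}$ and factoring $X$ out of the left-hand side turns this into $X\,\nu(Y)<\delta-1$, where $\nu(\cdot)$ is exactly the function in the proposition; this is pure algebra.

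The crux is the sign of $\nu(y)$ for $y\ge 0$. Putting the two fractions in $\nu(y)$ over the common denominator $N_0(\eta\beta y+(1-\beta))(P(1-\beta)y+N_0)$, which is strictly positive for $y\ge 0$, the numerator equals $(1-\beta)P\eta\beta\,q(y)$ with $q(y):=P(1-\beta)y^2+N_0(1-\delta)y-\delta N_0(1-\beta)/(\eta\beta)$, so $\nu(y)$ and $q(y)$ have the same sign. The quadratic $q$ has positive leading coefficient, and the product of its roots, $c/a=-\delta N_0/(\eta\beta P)$, is negative, so it has exactly one nonnegative root; solving $q(y)=0$ and simplifying shows that root is $\theta_1$. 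Hence $\nu(y)\le 0$ for $y\in[0,\theta_1]$ and $\nu(y)>0$ for $y>\theta_1$. I expect this sign/root bookkeeping---in particular checking that the admissible root of $q$ is precisely $\theta_1$---to be the main obstacle; everything else is mechanical.

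Finally I would condition on $Y=y$ and use independence. For $y\le\theta_1$ we have $X\,\nu(y)\le 0<\delta-1$, so outage is certain; for $y>\theta_1$ the event becomes $X<(\delta-1)/\nu(y)$, whose probability is $1-\exp\!\big(-(\delta-1)/(\nu(y)\lambda_{\mathcal{SR}})\big)$ from the exponential CDF of $X$. Multiplying by the density $f_{|h_{\mathcal{RD}}|^2}(y)=\frac{1}{\lambda_{\mathcal{RD}}}e^{-y/\lambda_{\mathcal{RD}}}$ and integrating over $y\ge 0$, the two ``probability one'' contributions recombine into $\int_0^\infty f_{|h_{\mathcal{RD}}|^2}(y)\,dy=1$, and the remainder is $-\frac{1}{\lambda_{\mathcal{RD}}}\int_{\theta_1}^{\infty}\exp\!\big(-\frac{\delta-1}{\nu(y)\lambda_{\mathcal{SR}}}-\frac{y}{\lambda_{\mathcal{RD}}}\big)\,dy$, which is \eqref{eq:out_fin} after renaming $y$ to $x$.
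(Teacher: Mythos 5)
Your proposal is correct and follows essentially the same route as the paper's Appendix~\ref{app:out_ex}: drop the $N_0^2$ term at high SNR, reduce the outage event to $|h_{\mathcal{SR}}|^2\,\nu(|h_{\mathcal{RD}}|^2)<\delta-1$, split on the sign of $\nu$ with $\theta_1$ as its unique nonnegative zero, and integrate the exponential CDF against the density of $|h_{\mathcal{RD}}|^2$. Your explicit quadratic/root-sign bookkeeping merely fills in a step the paper asserts without detail, so no substantive difference.
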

\begin{proof}
See Appendix~\ref{app:out_ex}.
\end{proof}
Equation~\eqref{eq:out_fin} is obtained using the high SNR approximation of the received SNR at the destination given as
\begin{equation}
\gamma_{\mathcal{D}} \approx \frac{\eta \beta(1-\beta)P|h_{\mathcal{S}\mathcal{R}}|^{2}|h_{\mathcal{R}\mathcal{D}}|^{2}}{N_0\left(\eta \beta |h_{\mathcal{R}\mathcal{D}}|^{2} + (1-\beta)\right)}.
\label{eq:snr_des_appr}
\end{equation}
Equation~\eqref{eq:snr_des_appr} can be obtained from the exact expression given in \eqref{eq:snr_des} of $\gamma_{\mathcal{D}}$ by neglecting the term $\frac{N_0^2}{\left(P_{\mathcal{S}} |h_{\mathcal{S}\mathcal{R}}|^2 + P_{\mathcal{D}} |h_{\mathcal{D}\mathcal{R}}|^2\right)}$ (due to negligible $N_0^2$ at high SNR) from the denominator of \eqref{eq:snr_des}. The approximation in \eqref{eq:snr_des_appr} is analytically more tractable than the exact expression in \eqref{eq:out1}.\footnote{In fact, the complex structure of~\eqref{eq:out1} does not allow us to get an exact analytical expression for the secrecy outage probability. This is because, the term $\left( |h_{\mathcal{S}\mathcal{R}}|^2 +  |h_{\mathcal{D}\mathcal{R}}|^2\right)$ in the denominator of~\eqref{eq:out1} prevents the separation of two random variables $|h_{\mathcal{S}\mathcal{R}}|^2$ and $|h_{\mathcal{R}\mathcal{D}}|^2$, which in turn, impedes the simplification of \eqref{eq:out1} to get an exact analytical expression.}
Although the integral in~\eqref{eq:out_fin} cannot be expressed in a closed form, it can be easily evaluated numerically as the integrand consists of elementary functions.

As aforementioned in Section~\ref{sec:energy_act}, the received power at the relay must be greater than the minimum power threshold $\theta_H$ to activate the energy harvesting circuitry. Using channel reciprocity on the relay-destination link, we can write the received power $P_{R}$ at the relay as
\begin{equation}
P_{R} = \left(P |h_{\mathcal{S}\mathcal{R}}|^2 + P |h_{\mathcal{R}\mathcal{D}}|^2\right).
\end{equation}
If the received power $P_{R}$ is less than the power threshold $\theta_H$, the energy harvesting circuitry at the relay stays inactive, leading to the power outage. The following proposition gives the expression for the power outage probability $\mathbb{P}\left(P_R < \theta_H\right)$.
\begin{proposition}
\label{prop:eout}
We write the power outage probability $P_{\mathrm{p,out}}$ as follows:
\begin{equation}
\label{eq:eout}
P_{\mathrm{p,out}} =  \left\{
  \begin{array}{l l}
    1 - \frac{\lambda_{\mathcal{SR}}}{\lambda_{\mathcal{SR}} - \lambda_{\mathcal{RD}}}\exp\left(-\frac{\theta_H}{P \lambda_{\mathcal{SR}} }\right) \\
    -  \frac{\lambda_{\mathcal{RD}}}{\lambda_{\mathcal{RD}} - \lambda_{\mathcal{SR}}}\exp\left(-\frac{\theta_H}{P \lambda_{\mathcal{RD}} }\right), & \quad \mathrm{if}\,\, \lambda_{\mathcal{SR}} \neq  \lambda_{\mathcal{RD}} \\
    \Upsilon\left(2, \frac{\theta_H}{P\lambda_{\mathcal{SR}}} \right), & \quad \mathrm{if}\,\, \lambda_{\mathcal{SR}} = \lambda_{\mathcal{RD}},\\
  \end{array} \right.
  \end{equation}
where $\Upsilon(a,t) = \int_{0}^{t} x^{a-1} \exp(-x)\mathrm{d}x$ is the lower incomplete Gamma function.
\end{proposition}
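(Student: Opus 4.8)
The plan is to observe that $P_{\mathrm{p,out}}=\mathbb{P}(P_R<\theta_H)$ with $P_R=P\bigl(|h_{\mathcal{SR}}|^2+|h_{\mathcal{RD}}|^2\bigr)$ is nothing but the cumulative distribution function of a scaled sum of two independent exponential random variables, evaluated at $\theta_H$. Here I would first use channel reciprocity ($h_{\mathcal{RD}}=h_{\mathcal{DR}}$) to justify writing $P_R$ in terms of $|h_{\mathcal{RD}}|^2$ alone, so that the two summands are exactly the two independent link power gains, each exponentially distributed with means $\lambda_{\mathcal{SR}}$ and $\lambda_{\mathcal{RD}}$ as specified in the system model. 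Setting $t=\theta_H/P$, the task reduces to computing $F_Z(t)=\mathbb{P}\bigl(|h_{\mathcal{SR}}|^2+|h_{\mathcal{RD}}|^2\le t\bigr)$ for $Z=|h_{\mathcal{SR}}|^2+|h_{\mathcal{RD}}|^2$.

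The first step is to obtain the density of $Z$ by convolving the two exponential densities, which is legitimate because the channels on distinct links are independent. For $\lambda_{\mathcal{SR}}\ne\lambda_{\mathcal{RD}}$ the convolution $\int_0^z \frac{1}{\lambda_{\mathcal{SR}}}e^{-u/\lambda_{\mathcal{SR}}}\,\frac{1}{\lambda_{\mathcal{RD}}}e^{-(z-u)/\lambda_{\mathcal{RD}}}\,\mathrm{d}u$ evaluates to the hypoexponential density $\frac{1}{\lambda_{\mathcal{SR}}-\lambda_{\mathcal{RD}}}\bigl(e^{-z/\lambda_{\mathcal{SR}}}-e^{-z/\lambda_{\mathcal{RD}}}\bigr)$; for $\lambda_{\mathcal{SR}}=\lambda_{\mathcal{RD}}=\lambda$ it collapses to the Erlang/Gamma-$(2,\lambda)$ density $\frac{z}{\lambda^2}e^{-z/\lambda}$. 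The second step is to integrate this density from $0$ to $t$. In the distinct-means case, elementary integration of the two exponentials gives $F_Z(t)=1-\frac{\lambda_{\mathcal{SR}}}{\lambda_{\mathcal{SR}}-\lambda_{\mathcal{RD}}}e^{-t/\lambda_{\mathcal{SR}}}-\frac{\lambda_{\mathcal{RD}}}{\lambda_{\mathcal{RD}}-\lambda_{\mathcal{SR}}}e^{-t/\lambda_{\mathcal{RD}}}$, and substituting $t=\theta_H/P$ reproduces the first branch of \eqref{eq:eout}. In the equal-means case, the substitution $u=z/\lambda$ turns $\int_0^t\frac{z}{\lambda^2}e^{-z/\lambda}\,\mathrm{d}z$ into $\int_0^{t/\lambda}u e^{-u}\,\mathrm{d}u=\Upsilon\!\left(2,t/\lambda\right)$, i.e.\ the lower incomplete Gamma function at $\theta_H/(P\lambda_{\mathcal{SR}})$, which is the second branch.

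There is no genuine obstacle here; the only point requiring care is the case split. The generic hypoexponential expression has a removable singularity as $\lambda_{\mathcal{SR}}\to\lambda_{\mathcal{RD}}$, so the coincident-parameter case must be handled separately to avoid a $0/0$ form, and the Gamma-distribution result is precisely the limit one recovers. I would also remark that the proposition is conditioning implicitly on nothing extra beyond the stated channel statistics, so the derivation is purely a probability-distribution computation and does not interact with the SNR approximations used elsewhere in the section.
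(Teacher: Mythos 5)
Your proposal is correct and follows essentially the same route as the paper's proof in Appendix B: both reduce $P_{\mathrm{p,out}}$ to the CDF of $Z=|h_{\mathcal{SR}}|^2+|h_{\mathcal{RD}}|^2$ evaluated at $\theta_H/P$, use the hypoexponential density for $\lambda_{\mathcal{SR}}\neq\lambda_{\mathcal{RD}}$ and the Erlang/Gamma density for $\lambda_{\mathcal{SR}}=\lambda_{\mathcal{RD}}$ (which the paper cites from Papoulis and you derive by explicit convolution), and then integrate to obtain the two branches of \eqref{eq:eout}. Your added remark about the removable singularity in the coincident-parameter case is a harmless refinement, not a departure from the paper's argument.
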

\begin{proof}
See Appendix~\ref{app:eout}.
\end{proof}
For an energy constrained untrusted relay, a secrecy outage can also occur if the power received by the relay is insufficient to activate the energy harvesting circuitry~\cite{duong2015}. Thus, combining with \eqref{eq:out_fin}, we can write the overall secrecy outage probability $P_{\mathrm{out}}^{\mathrm{s}}$ as~\cite{duong2015}
\begin{equation}
P_{\mathrm{out}}^{\mathrm{s}} = P_{\mathrm{p,out}} + (1- P_{\mathrm{p,out}})P_{\mathrm{out}},
\label{eq:tot_out}
\end{equation}
where $P_{\mathrm{out}}$ is given by \eqref{eq:out_fin}.
\subsubsection{Probability of Positive Secrecy Rate}
The destination-assisted jamming helps to keep the source information confidential from the relay and achieve the secure communication. In this regard, the probability $P_{\mathrm{pos}}$ of achieving strictly positive secrecy rate is an important measure of the secrecy performance. We provide the exact and approximate analytical expression for $P_{\mathrm{pos}}$ in the following proposition.
\begin{proposition}
\label{prop:ppos}
We write the exact and high SNR approximation analytical expressions for the probability of achieving strictly positive secrecy rate $P_{\mathrm{pos}}$ as follows:
\begin{subequations}
\begin{align}
P_{\mathrm{pos}} &= (1 - P_{\mathrm{p,out}})\Bigg[\exp \! \left(\!-\frac{\theta_3}{\lambda_{\mathcal{R}\mathcal{D}}}\!\right)
  \nonumber \\
 &  + \frac{1}{\lambda_{\mathcal{R}\mathcal{D}}}\!\!\int_{\theta_2}^{\theta_3}\!\!\!\exp\left(\!\!-\left(\!\frac{\psi(x)}{\lambda_{\mathcal{S}\mathcal{R}}} + \frac{x}{\lambda_{\mathcal{R}\mathcal{D}}}\!\right)\!\right)\!\mathrm{d}x\Bigg]
\label{eq:sec_pos2}\\
&\!\!\approx (1 - P_{\mathrm{p,out}})\exp\!\!\left(\!\!-\sqrt{\frac{\theta_2}{\lambda^{2}_{\mathcal{R}\mathcal{D}}}}\right)\!,\, (\text{high SNR approximation}),\label{eq:sec_pos_apprx} 
\end{align}
\label{eq:ppos}
\end{subequations}
where
\begin{subequations}
\begin{equation}
\theta_2 = \mathcal{A},
\end{equation}
\begin{align}
\theta_3 &= \left(\frac{\mathcal{B}}{2} + \sqrt{\left(\frac{\mathcal{B}}{2}\right)^{2} + \left(-\frac{\mathcal{A}}{3}\right)^{3}}\right)^{\frac{1}{3}} \nonumber \\
&+ \left(\frac{\mathcal{B}}{2} - \sqrt{\left(\frac{\mathcal{B}}{2}\right)^{2} + \left(-\frac{\mathcal{A}}{3}\right)^{3}}\right)^{\frac{1}{3}},
\label{eq:theta2}
\end{align}
with $\mathcal{A} = \frac{N_0}{\eta \beta P}$ and $\mathcal{B} = \frac{N_0^2}{\eta \beta (1-\beta) P^2}$, and 
\begin{equation}
\psi(x) = \frac{N_0^2}{P(1-\beta)(\eta \beta P x^2 - N_0)} - x.
\end{equation}
\end{subequations}
with
\begin{equation}
\label{eq:binary}
\psi(x)  \left\{
  \begin{array}{l l}
    < 0, & \quad 0 \leq x < \theta_2 \\
    \geq 0, & \quad \theta_2 \leq x \leq \theta_3,\\
  < 0, & \quad \theta_3 < x < \infty. \\
  \end{array} \right.
\end{equation}
$\theta_2$ is the positive root of the equation $g(x) = \eta \beta Px^{2} - N_0 = 0$, while $\theta_3$ is the real root of $\psi(x) = 0$ that is equivalent to a cubic equation given as $x^3 - \mathcal{A}x - \mathcal{B} = 0$.
\end{proposition}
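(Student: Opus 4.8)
The plan is to compute $P_{\mathrm{pos}} = \mathbb{P}(R_{\mathrm{sec}} > 0)$ directly, conditioning on the energy harvesting circuitry being active (which contributes the factor $1 - P_{\mathrm{p,out}}$, using the independence of the power outage event structure and Proposition~\ref{prop:eout}). From \eqref{eq:sec_cap}, a strictly positive secrecy rate occurs exactly when $\gamma_{\mathcal{D}} > \gamma_{\mathcal{R}}$. Substituting the exact $\gamma_{\mathcal{R}}$ from \eqref{eq:snr_rel} and the high-SNR form $\gamma_{\mathcal{D}}$ from \eqref{eq:snr_des_appr} (consistent with how Proposition~\ref{prop:out_ex} was derived), the inequality $\gamma_{\mathcal{D}} > \gamma_{\mathcal{R}}$ becomes a relation between $u = |h_{\mathcal{S}\mathcal{R}}|^2$ and $v = |h_{\mathcal{R}\mathcal{D}}|^2$. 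First I would clear denominators and solve for $u$ in terms of $v$: after cancelling the common factor $(1-\beta)P$ on both sides and simplifying, the condition reduces to $u$ being larger than a ratio whose sign depends on the sign of $\eta\beta P v^2 - N_0$ — precisely the function $g(v)$ and the quantity $\psi(v)$ in the statement. This explains the piecewise structure \eqref{eq:binary}: when $v < \theta_2$, the relevant coefficient makes the constraint on $u$ vacuous or reversed, and likewise for the cubic crossover at $\theta_3$.

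Next I would carry out the case analysis over the three regimes of \eqref{eq:binary}. In the middle regime $\theta_2 \le v \le \theta_3$, where $\psi(v) \ge 0$, the event $\{R_{\mathrm{sec}}>0\}$ translates to $u > \psi(v)$ (after isolating $u$), so conditioning on $v$ and integrating the exponential tail of $u$ over $[\psi(v),\infty)$ gives $\exp(-\psi(v)/\lambda_{\mathcal{SR}})$; integrating against the density of $v$ over $[\theta_2,\theta_3]$ yields the integral term in \eqref{eq:sec_pos2}. In the regime $v > \theta_3$ (and, after checking, the boundary behaviour), the constraint on $u$ becomes automatically satisfied, so the contribution is simply $\mathbb{P}(v > \theta_3) = \exp(-\theta_3/\lambda_{\mathcal{RD}})$, the first term in \eqref{eq:sec_pos2}; the regime $v < \theta_2$ contributes nothing. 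To pin down $\theta_2$ and $\theta_3$ I would solve $g(x)=\eta\beta P x^2 - N_0 = 0$ for the positive root $\theta_2 = \sqrt{N_0/(\eta\beta P)} = \sqrt{\mathcal{A}}$ — wait, the statement writes $\theta_2 = \mathcal{A}$, so I would double-check the scaling and present $\theta_2$ as the positive root of $g$ exactly as written — and solve $\psi(x)=0$, which clears to the depressed cubic $x^3 - \mathcal{A}x - \mathcal{B} = 0$; Cardano's formula with discriminant $(\mathcal{B}/2)^2 + (-\mathcal{A}/3)^3$ gives the real root $\theta_3$ in \eqref{eq:theta2}.

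For the high-SNR approximation \eqref{eq:sec_pos_apprx}, the plan is to observe that as $P/N_0 \to \infty$ we have $\mathcal{A},\mathcal{B} \to 0$, so $\theta_2, \theta_3 \to 0$ and the interval $[\theta_2,\theta_3]$ collapses; the dominant behaviour comes from the boundary term with $\psi(v)$ near $v \approx \theta_2$, where $\psi(v) \to \sqrt{\theta_2}$ in the appropriate limit (coming from the $N_0^2/(P(1-\beta)(\eta\beta P x^2 - N_0))$ term evaluated near the crossover), leaving $\exp(-\sqrt{\theta_2}/\lambda_{\mathcal{RD}})$ up to the normalization $\sqrt{\theta_2/\lambda_{\mathcal{RD}}^2}$. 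I expect the main obstacle to be twofold: first, carefully tracking signs when isolating $u$ across the pole of $\psi$ at $x = \sqrt{N_0/(\eta\beta P)}$, since multiplying an inequality by $\eta\beta P x^2 - N_0$ flips it precisely on one side of $\theta_2$, and getting this wrong would scramble which regime contributes what; second, justifying the high-SNR collapse in \eqref{eq:sec_pos_apprx} rigorously rather than heuristically — identifying exactly which term dominates and showing the integral term and the $\theta_3$-term both degenerate to the stated single exponential. Everything else is routine integration of exponential densities. The detailed calculation is deferred to the appendix.
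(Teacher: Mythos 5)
Your proposal has the right case-analysis machinery but starts from the wrong SNR expression, and this is a genuine gap. The exact expression \eqref{eq:sec_pos2} is obtained in the paper by substituting the \emph{exact} destination SNR \eqref{eq:snr_des} (keeping the term $N_0^2/\bigl(P(|h_{\mathcal{S}\mathcal{R}}|^2+|h_{\mathcal{R}\mathcal{D}}|^2)\bigr)$), not the high-SNR form \eqref{eq:snr_des_appr} that you propose "consistent with Proposition~\ref{prop:out_ex}". With \eqref{eq:snr_des_appr}, both $\gamma_{\mathcal{D}}$ and $\gamma_{\mathcal{R}}$ are linear in $|h_{\mathcal{S}\mathcal{R}}|^2$, so that variable cancels from the event $\{\gamma_{\mathcal{D}}>\gamma_{\mathcal{R}}\}$ and the condition collapses to $\eta\beta P|h_{\mathcal{R}\mathcal{D}}|^4>N_0$, i.e.\ $|h_{\mathcal{R}\mathcal{D}}|^2>\sqrt{N_0/(\eta\beta P)}$; this yields precisely the single exponential \eqref{eq:sec_pos_apprx} and can never produce $\psi(\cdot)$, $\theta_3$, or the integral term --- contrary to your claim that the high-SNR $\gamma_{\mathcal{D}}$ leads to a constraint $u>\psi(v)$. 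Only with the exact $\gamma_{\mathcal{D}}$ does $\gamma_{\mathcal{D}}>\gamma_{\mathcal{R}}$ reduce to $(|h_{\mathcal{S}\mathcal{R}}|^2+|h_{\mathcal{R}\mathcal{D}}|^2)\,P(1-\beta)\bigl(\eta\beta P|h_{\mathcal{R}\mathcal{D}}|^4-N_0\bigr)>N_0^2$, after which your three-regime split on the sign of $\eta\beta Px^2-N_0$, the exponential-tail integration of $|h_{\mathcal{S}\mathcal{R}}|^2$ over $[\psi(x),\infty)$, and Cardano's formula for $x^3-\mathcal{A}x-\mathcal{B}=0$ go through exactly as in the paper's Appendix~C. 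So as written, the first half of your derivation would terminate at \eqref{eq:sec_pos_apprx} rather than establish \eqref{eq:sec_pos2}.

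The second half is also not how the paper proceeds, and your sketch of it is shaky: the paper derives \eqref{eq:sec_pos_apprx} by the direct computation just described (exact given the approximate $\gamma_{\mathcal{D}}$, no asymptotics needed), not by letting $P/N_0\to\infty$ in \eqref{eq:sec_pos2}. If you insist on the limiting route, the surviving contribution is $\exp(-\theta_3/\lambda_{\mathcal{R}\mathcal{D}})$ with the real cubic root tending to $\sqrt{\mathcal{A}}$ (since $\mathcal{B}/\mathcal{A}^{3/2}\to 0$), while the integration interval collapses and its integrand does not contribute; it is not a "boundary term with $\psi(v)\to\sqrt{\theta_2}$ near $v\approx\theta_2$" --- indeed $\psi$ blows up at the pole $x=\sqrt{N_0/(\eta\beta P)}$. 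Your side remark on $\theta_2$ is, however, well taken: the proposition's "$\theta_2=\mathcal{A}$" is inconsistent with "$\theta_2$ is the positive root of $\eta\beta Px^2-N_0=0$" (which is $\sqrt{\mathcal{A}}$); the paper's own appendix uses $\sqrt{N_0/(\eta\beta P)}$ as the lower limit of the integral in \eqref{eq:sec_pos2} and uses $\theta_2=\mathcal{A}$ only inside $\sqrt{\theta_2/\lambda^2_{\mathcal{R}\mathcal{D}}}$ in \eqref{eq:sec_pos_apprx}, so the correct reading is $\theta_2=\sqrt{\mathcal{A}}$ in the integral limits.
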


\begin{proof}
See Appendix \ref{app:ppos}.
\end{proof}

\subsubsection{Ergodic Secrecy Rate}
Another important secrecy metric is the ergodic secrecy rate, which is the maximum transmission rate at which the eavesdropper fails to decode the secret information that is being transmitted. We can obtain the ergodic secrecy rate by averaging out the instantaneous secrecy rate $R_{\mathrm{sec}}$ over all possible channel realizations. Therefore, in the case of untrusted relaying, the ergodic secrecy rate, with the inclusion of power outage probability $P_{\mathrm{p,out}}$ given by \eqref{eq:eout}, can be given as
\begin{align}
\bar{R}_{\mathrm{sec}}& = (1 - P_{\mathrm{p,out}}) \mathbb{E}\lbrace R_{\mathrm{sec}} \rbrace  \nonumber \\
&= (1 - P_{\mathrm{p,out}}) \mathbb{E}\left\lbrace\frac{1}{2}\left[\log_2\left(\frac{1 + \gamma_{\mathcal{D}}}{1 + \gamma_{\mathcal{R}}}\right)\right]^{+} \right\rbrace,
\label{eq:esc1}
\end{align}
where $\mathbb{E}\lbrace\cdot\rbrace$ is the expectation operator. Using \eqref{eq:snr_rel} and \eqref{eq:snr_des} in \eqref{eq:esc1}, we can write the analytical expression for $\bar{R}_{\mathrm{sec}}$ as
\begin{align}
\bar{R}_{\mathrm{sec}}& = (1 - P_{\mathrm{p,out}}) \nonumber \\
& \times \int_{x = 0}^{\infty}\int_{y = 0}^{\infty}\! \left[\frac{1}{2}\log_2\left(\frac{1 + \frac{\eta \beta(1-\beta)P xy}{ \eta \beta  N_0 y+ N_0 (1-\beta) +\frac{ N_0^2}{P\left(x +  y\right)} }}{1 + \frac{(1-\beta)Px}{(1-\beta)Py + N_0}}\right)\!\right]^{+} \nonumber \\
& \times f_{|h_{\mathcal{S}\mathcal{R}}|^{2}}(x)f_{|h_{\mathcal{R}\mathcal{D}}|^{2}}(y)\, \mathrm{d}x \, \mathrm{d}y.
\label{eq:exact_erg}
\end{align}
Using high SNR approximation for $\gamma_{\mathcal{D}}$ as given in \eqref{eq:snr_des_appr}, we can write $\bar{R}_{\mathrm{sec}}$ as
\begin{align}
\bar{R}_{\mathrm{sec}} &\approx (1 - P_{\mathrm{p,out}}) \nonumber \\
& \times \int_{x = 0}^{\infty}\int_{y = 0}^{\infty}\left[\frac{1}{2}\log_2\left(\frac{1 + \frac{\eta \beta(1-\beta)P|h_{\mathcal{S}\mathcal{R}}|^{2}|h_{\mathcal{R}\mathcal{D}}|^{2}}{N_0\left(\eta \beta |h_{\mathcal{R}\mathcal{D}}|^{2} + (1-\beta)\right)}}{1 + \frac{(1-\beta)P|h_{\mathcal{S}\mathcal{R}}|^{2}}{(1-\beta)P|h_{\mathcal{R}\mathcal{D}}|^{2} + N_0}}\right)\right]^{+} \nonumber \\
& \times f_{|h_{\mathcal{S}\mathcal{R}}|^{2}}(x)f_{|h_{\mathcal{R}\mathcal{D}}|^{2}}(y)\, \mathrm{d}x \, \mathrm{d}y.
\label{eq:csec}
\end{align}
The expressions in~\eqref{eq:exact_erg} and \eqref{eq:csec} do not admit a closed form and are intractable. Alternatively, we provide a closed-form lower bound on \eqref{eq:csec} as given in the following Proposition. The lower bound on the ergodic secrecy rate ensures the minimum ergodic secrecy rate under all possible channel conditions for a given set of parameters.\footnote{Such guarantee of minimum performance is a useful criterion in the design of a secure communication system.}
\begin{proposition}
The ergodic secrecy rate $\bar{R}_{\mathrm{sec}}$ in \eqref{eq:csec} is lower bounded as
\begin{equation}
\bar{R}_{\mathrm{sec}} \geq (1 - P_{\mathrm{p,out}}) \max\left(\frac{1}{2\ln(2)}(T_1 - T_2), 0\right),
\label{eq:lesc}
\end{equation}
where 
\begin{subequations}
\begin{align}
T_1 &\geq \ln\left(1 + \exp\left(-2\phi - \ln\left(\frac{1}{m_x m_z}\right)\right) \right. \nonumber \\
& \left. + \exp\left(\frac{1}{m_z}\right)+ \mathrm{Ei}\left(-\frac{1}{m_z}\right)\right)
\label{eq:lesc1}
\end{align}
and
\begin{equation}
   \label{eq:T2_fin}
T_2 = \left\{
  \begin{array}{l l}
     1+\frac{1}{m_x}\exp\left(\frac{1}{m_x}\right)\mathrm{Ei}\left(-\frac{1}{m_x}\right), & \quad \frac{m_y}{m_x} = 1\\
    \frac{m_x}{m_x - m_y}\left[\exp\left(\frac{1}{m_y}\right)\mathrm{Ei}\left(-\frac{1}{m_y}\right) \right.  \\
    \left. -  \exp\left(\frac{1}{m_x}\right)\mathrm{Ei}\left(-\frac{1}{m_x}\right)\right], & \quad \frac{m_y}{m_x} \neq 1,\\
    \end{array} \right.
 \end{equation}
 \end{subequations}
\noindent with $m_x = \frac{(1-\beta)P\lambda_{\mathcal{S}\mathcal{R}}}{N_0}$, $m_y  = \frac{(1-\beta)P\lambda_{\mathcal{R}\mathcal{D}}}{N_0}$, $m_z = \frac{\eta \beta \lambda_{\mathcal{R}\mathcal{D}}}{1-\beta}$, $\phi \approx \mathrm{0.577215}$, is the Euler's constant~\cite[9.73]{grad}, and $\mathrm{Ei}(x) = -\int_{-x}^{\infty} \left(\exp(-t)/t\right)\mathrm{d}t$, is the exponential integral~\cite[8.21]{grad}.
\label{prop:esc}
\end{proposition}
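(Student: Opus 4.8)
The plan is to lower-bound the double integral in~\eqref{eq:csec} by separating the numerator and denominator contributions to the $\log$, then handling each expectation individually. First I would drop the $[\cdot]^+$ and write $\bar R_{\mathrm{sec}} \geq (1-P_{\mathrm{p,out}})\max\!\big(\tfrac{1}{2\ln 2}(T_1-T_2),0\big)$ with $T_1 = \mathbb{E}\{\ln(1+\gamma_{\mathcal D})\}$ (using the high-SNR $\gamma_{\mathcal D}$ of~\eqref{eq:snr_des_appr}) and $T_2 = \mathbb{E}\{\ln(1+\gamma_{\mathcal R})\}$; the conversion from $\log_2$ to $\ln$ produces the $1/(2\ln 2)$ prefactor, and $\mathbb{E}\{[\log(a/b)]^+\} \geq [\mathbb{E}\log a - \mathbb{E}\log b]^+$ is not quite valid directly, so more carefully I would use $[\log(1+\gamma_{\mathcal D}) - \log(1+\gamma_{\mathcal R})]^+ \geq \log(1+\gamma_{\mathcal D}) - \log(1+\gamma_{\mathcal R})$ pointwise and then take expectations, with the outer $\max(\cdot,0)$ absorbing any negativity. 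Set $X = |h_{\mathcal{SR}}|^2$, $Y = |h_{\mathcal{RD}}|^2$, so $X/\lambda_{\mathcal{SR}}$ and $Y/\lambda_{\mathcal{RD}}$ are unit-mean exponentials; with $m_x,m_y,m_z$ as defined, $\gamma_{\mathcal R} = m_x X' / (m_y Y' + 1)$ after rescaling (where $X',Y'$ are unit-mean), and $\gamma_{\mathcal D} \approx m_x m_z X' Y'/(m_z Y' + 1)$ up to the normalization — I would first verify these reductions carefully since the constants in $T_1,T_2$ depend on them.

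For $T_2 = \mathbb{E}\{\ln(1+\gamma_{\mathcal R})\}$ I would compute the expectation over $Y$ first: conditioned on $Y$, $\gamma_{\mathcal R}$ is a scaled exponential in $X$, and $\mathbb{E}_X\{\ln(1 + cX)\}$ for exponential $X$ gives an $\exp(\cdot)\mathrm{Ei}(\cdot)$ term via the standard identity $\int_0^\infty \ln(1+cx)\tfrac1\mu e^{-x/\mu}dx = -e^{1/(c\mu)}\mathrm{Ei}(-1/(c\mu))$. Then averaging the resulting expression over the exponential $Y$ produces the two-case formula in~\eqref{eq:T2_fin}, the split being according to whether $m_y/m_x=1$ (the degenerate case where a partial-fraction denominator vanishes and one instead gets the $1 + \tfrac1{m_x}e^{1/m_x}\mathrm{Ei}(-1/m_x)$ form) or $m_y/m_x\neq 1$ (genuine partial fractions in the ratio of the two exponential densities). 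This is essentially the known ergodic-capacity calculation for a fixed-gain/variable-gain AF or interference-limited link, so I expect it to go through cleanly.

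For $T_1$, the quantity $\mathbb{E}\{\ln(1+\gamma_{\mathcal D})\}$ with $\gamma_{\mathcal D}$ a ratio $\propto XY/(aY+b)$ does not have a clean closed form, so here I would apply Jensen's inequality in the direction that yields a lower bound: $\mathbb{E}\{\ln(1+Z)\} \geq \ln(1 + \exp(\mathbb{E}\{\ln Z\}))$, valid because $t \mapsto \ln(1+e^t)$ is convex. Thus I need $\mathbb{E}\{\ln \gamma_{\mathcal D}\}$, which decomposes into $\mathbb{E}\{\ln X\} + \mathbb{E}\{\ln(m_x m_z)\}$-type constants plus $\mathbb{E}\{\ln Y\} - \mathbb{E}\{\ln(m_z Y + 1)\}$. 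The term $\mathbb{E}\{\ln X\}$ for a unit-mean exponential is $-\phi$ (Euler's constant), which is where the $-2\phi$ in~\eqref{eq:lesc1} comes from (one $-\phi$ from $X$, one from $Y$), and $\mathbb{E}\{\ln(m_z Y+1)\}$ gives the $-\exp(1/m_z)\mathrm{Ei}(-1/m_z)$ contribution — note~\eqref{eq:lesc1} carries $+\exp(1/m_z) + \mathrm{Ei}(-1/m_z)$, so I would need to track whether an additional bounding step (e.g.\ $\ln(m_z Y+1) \leq$ something, or a further Jensen move on that sub-expectation) is being used, which explains the inequality sign in~\eqref{eq:lesc1} rather than equality. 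Assembling $T_1 - T_2$ and inserting the $(1-P_{\mathrm{p,out}})$ factor and the outer $\max$ completes the argument.

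The main obstacle I anticipate is pinning down exactly which inequalities are invoked for $T_1$: there is one Jensen step to pull the expectation inside $\ln(1+e^{(\cdot)})$, and the precise form of~\eqref{eq:lesc1} (with its $\geq$ and the somewhat unusual $\exp(1/m_z) + \mathrm{Ei}(-1/m_z)$ grouping) suggests a second approximation or a crude bound on $\mathbb{E}\{\ln(1+m_z Y)\}$ versus $\mathbb{E}\{\ln(m_z Y)\}$ type term; getting the constants to line up (especially the sign and argument of the $\mathrm{Ei}$ terms, and confirming $\mathbb{E}\{\ln(\text{unit-mean exponential})\} = -\phi$) is the delicate bookkeeping. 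The $T_2$ side and the reduction to $m_x,m_y,m_z$ variables I expect to be routine.
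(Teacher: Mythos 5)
Your proposal is correct and follows essentially the same route as the paper's Appendix: the same bound that discards the $[\cdot]^{+}$ and keeps an outer $\max(\cdot,0)$, the same Jensen step on the convex map $t \mapsto \ln(1+e^{t})$ for $T_1$ with $\mathbb{E}\{\ln(\text{exponential, mean }m)\} = \ln m - \phi$ and $\mathbb{E}\{\ln(1+Z)\} = -\exp(1/m_z)\mathrm{Ei}(-1/m_z)$, and an exact evaluation of $T_2$ (the paper computes the CDF of $\gamma_{\mathcal{R}}$ and integrates by parts rather than conditioning on $Y$ first, but this is the same routine calculation yielding the same two-case formula). The discrepancy you flagged in \eqref{eq:lesc1} requires no additional bounding step: the appendix derivation gives $T_1 \geq \ln\left(1+\exp\left(-2\phi - \ln\frac{1}{m_x m_z} + \exp\left(\frac{1}{m_z}\right)\mathrm{Ei}\left(-\frac{1}{m_z}\right)\right)\right)$, i.e., the $\exp(1/m_z)\,\mathrm{Ei}(-1/m_z)$ term appears as a product inside the exponent (the additive grouping printed in the proposition is a typographical slip), and the inequality sign in \eqref{eq:lesc1} is exactly the Jensen step you already identified.
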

\begin{proof}
See Appendix~\ref{app:esc}.
\end{proof}
The lower bound given in \eqref{eq:lesc} is tight in high SNR regime, which is depicted in Fig.~\ref{fig:6} of Section~\ref{sec:results}. Proposition~\ref{prop:esc} shows that the ergodic secrecy rate depends on the power splitting factor $\beta$, energy conversion efficiency factor $\eta$, and mean channel gains of source-to-relay and relay-to-destination links.

\section{Time Switching Policy Based Relaying}
\label{sec:TS}
\begin{figure}
\centering
\includegraphics[scale=0.23]{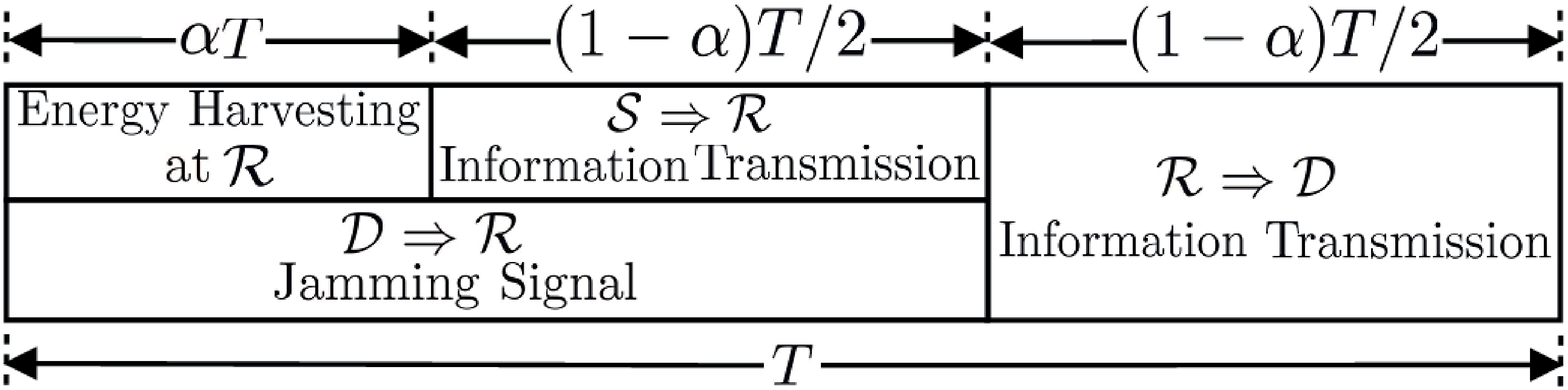}\vspace*{-2mm}
\caption{Time switching policy for the secure communication via an energy harvesting untrusted relay.}\vspace*{-4mm}
\label{fig:TSP}
\end{figure}
Fig.~\ref{fig:TSP} shows the time switching (TS) policy based relaying protocol for the secure communication via untrusted relay. The communication between the source and the destination happens over two hops and in a duration of $T$. The relay harvests energy for $\alpha T$ duration ($0 \leq \alpha \leq 1$) from the received RF signals. The relay spends its harvested energy to forward the received information from the source to the destination. The remaining $(1-\alpha)T$ duration is further split in two sub-slots of equal duration of $\frac{(1-\alpha)T}{2}$. In the first sub-slot, the source transmits the information to relay, which is forwarded to the destination in the second sub-slot after the amplification. The destination sends a jamming signal during the source-to-relay transmission.

\subsection{Energy Harvesting at Relay}
For the aforementioned TS policy, the energy $E_{H}$ harvested during $\alpha T$ duration is given by
\begin{equation}
E_{H} = \eta \alpha T \left(P_{\mathcal{S}} |h_{\mathcal{S}\mathcal{R}}|^2 + P_{\mathcal{D}} |h_{\mathcal{D}\mathcal{R}}|^2\right).
\end{equation} 
The relay uses this harvested energy to forward the source information to the destination with power given by
\begin{equation}
P_{H} = \frac{E_{H}}{(1-\alpha)T/2} = \frac{2 \eta \alpha  \left(P_{\mathcal{S}} |h_{\mathcal{S}\mathcal{R}}|^2 + P_{\mathcal{D}} |h_{\mathcal{D}\mathcal{R}}|^2\right)}{1-\alpha}.
\label{eq:harv_power_TS}
\end{equation}

\subsection{Information Processing and Relaying Protocol}
After the energy harvesting phase, the relay switches to information processing phase, where the received signal is given by
\begin{equation}
y_{\mathcal{R}} = \sqrt{P_{\mathcal{S}}}h_{\mathcal{S}\mathcal{R}}x_{\mathcal{S}} + \sqrt{P_{\mathcal{D}}}h_{\mathcal{D}\mathcal{R}}x_{\mathcal{D}} + n_{\mathcal{R}}.
\label{eq:recTS}
\end{equation}
Note that, unless otherwise stated, all notations in this section have the same meanings as they have in Section~\ref{sec:PS} on the power splitting policy based relaying. Using the received signal $y_{\mathcal{R}}$ given in \eqref{eq:recTS}, the relay may attempt to decode source information. The received SNR at the relay is given by
\begin{equation}
\gamma_{\mathcal{R}} = \frac{P_{\mathcal{S}}|h_{\mathcal{S}\mathcal{R}}|^{2}}{P_{\mathcal{D}}|h_{\mathcal{D}\mathcal{R}}|^{2} + N_0}.
\label{eq:snr_relTS}
\end{equation}
The relay forwards the amplified version of the received signal to the destination, which is given by
\begin{align}
x_{\mathcal{R}} = \xi y_{\mathcal{R}}  =\sqrt{\frac{P_{H}}{P_{\mathcal{S}}|h_{\mathcal{S}\mathcal{R}}|^{2}+P_{\mathcal{D}}|h_{\mathcal{D}\mathcal{R}}|^{2} + N_0}}y_{\mathcal{R}}.
\label{eq:ampl1}
\end{align} 
Then the received signal $y'_{\mathcal{D}}$ at the destination is given by
\begin{align}
y'_{\mathcal{D}} &= h_{\mathcal{R}\mathcal{D}}x_{\mathcal{R}} + n_{\mathcal{D}}.
\label{eq:rec_des11}
\end{align}
After subtracting the term corresponding to the known jamming signal $x_{\mathcal{D}}$, the resultant received signal $y_{\mathcal{D}}$ at the destination becomes
\begin{equation}
y_{\mathcal{D}} = \xi\sqrt{P_{\mathcal{S}}}h_{\mathcal{S}\mathcal{R}}h_{\mathcal{R}\mathcal{D}}x_{\mathcal{S}} + \xi h_{\mathcal{R}\mathcal{D}} n_{\mathcal{R}} + n_{\mathcal{D}}.
\label{eq:rec_des12}
\end{equation} 
Substituting $P_{H}$ from~\eqref{eq:harv_power_TS}~in~\eqref{eq:ampl1}, and then $\xi$ from \eqref{eq:ampl1} in~\eqref{eq:rec_des12}, we can write the received signal $y_{\mathcal{D}}$ as
\begin{align}
y_{\mathcal{D}} &= \frac{\sqrt{2 \eta \alpha P_{\mathcal{S}} \left(P_{\mathcal{S}} |h_{\mathcal{S}\mathcal{R}}|^2 + P_{\mathcal{D}} |h_{\mathcal{D}\mathcal{R}}|^2\right)}h_{\mathcal{S}\mathcal{R}}h_{\mathcal{R}\mathcal{D}}x_{\mathcal{S}}}{\sqrt{(1-\alpha)(P_{\mathcal{S}}|h_{\mathcal{S}\mathcal{R}}|^{2}+P_{\mathcal{D}}|h_{\mathcal{D}\mathcal{R}}|^{2} + N_0)}} \nonumber \\
& + \frac{\sqrt{2 \eta \alpha \left(P_{\mathcal{S}} |h_{\mathcal{S}\mathcal{R}}|^2 + P_{\mathcal{D}} |h_{\mathcal{D}\mathcal{R}}|^2\right)}h_{\mathcal{R}\mathcal{D}}n_{\mathcal{R}}}{\sqrt{(1-\alpha)(P_{\mathcal{S}}|h_{\mathcal{S}\mathcal{R}}|^{2}+P_{\mathcal{D}}|h_{\mathcal{D}\mathcal{R}}|^{2} + N_0)}} + n_{\mathcal{D}}.
\label{eq:rec_des22}
\end{align}
The first term on the right hand side of \eqref{eq:rec_des22} represents the received signal part at the destination, while the last two terms represent the overall noise at the destination. Thus, we can write the received SNR at the destination as
\begin{align}
\gamma_{\mathcal{D}} = \frac{2 \eta \alpha P_{\mathcal{S}}|h_{\mathcal{S}\mathcal{R}}|^{2}|h_{\mathcal{R}\mathcal{D}}|^{2}}{2 \eta \alpha |h_{\mathcal{R}\mathcal{D}}|^2N_0 + N_0 (1-\alpha) +  \frac{N_0^2 (1-\alpha)}{ \left(P_{\mathcal{S}} |h_{\mathcal{S}\mathcal{R}}|^2 + P_{\mathcal{D}} |h_{\mathcal{D}\mathcal{R}}|^2\right)}}.
\label{eq:snr_desTS}
\end{align}
For the rest of the Section~\ref{sec:TS}, we assume $P_{\mathcal{S}} = P_{\mathcal{D}} = P$ for analytical tractability.

\subsection{Secure Communication Via an Untrusted Relay}

For the proposed TS policy, the instantaneous secrecy rate can be given by
\begin{align}
R_{\mathrm{sec}} &= \frac{(1-\alpha)}{2}\bigg[\log_2\left(1 +\gamma_{\mathcal{D}}\right) - \log_2\left(1 +\gamma_{\mathcal{R}}\right)\bigg]^{+} \nonumber \\
&=\frac{(1-\alpha)}{2}\left[\log_2\left(\frac{1 + \gamma_{\mathcal{D}}}{1 + \gamma_{\mathcal{R}}}\right)\right]^{+},
\label{eq:sec_capTS}
\end{align}
where $\gamma_{\mathcal{R}}$ and $\gamma_{\mathcal{D}}$ are given by \eqref{eq:snr_relTS} and \eqref{eq:snr_desTS}, respectively. The factor $(1-\alpha)/2$ denotes the effective time of information transmission between source and destination.

\subsubsection{Secrecy Outage Probability} 
We can express the secrecy outage probability as given in the Proposition~\ref{prop:out_exTS}.
\begin{proposition}
\label{prop:out_exTS}
For TS policy, given the energy harvesting circuitry of the relay is active, the secrecy outage probability is analytically given by~\eqref{eq:out_fin},
where $\delta = 2^{\frac{2R_{\mathrm{th}}}{1-\alpha}}$ with
\begin{equation}
\theta_1 = \frac{(\delta-1) + \sqrt{\left(\delta-1\right)^{2} + 4\delta \frac{P (1-\alpha)}{2 \eta \alpha N_0}}}{2(P/N_0)},
\end{equation} and  
\begin{equation}
\nu(x) = \left(\frac{2 \eta \alpha P x}{N_0 \left(2 \eta \alpha x + (1-\alpha)\right)}-\frac{P\delta}{Px + N_0}\right).
\end{equation} 
\end{proposition}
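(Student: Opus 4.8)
The plan is to mimic, almost verbatim, the derivation of Proposition~\ref{prop:out_ex} for the PS policy, tracking the two structural changes that the TS architecture introduces: (i) the relay now processes the \emph{full} received power (no factor $1-\beta$ multiplying the signal and jamming terms in $y_{\mathcal{R}}$), and (ii) the relay transmit power carries the factor $\frac{2\eta\alpha}{1-\alpha}$ instead of $\eta\beta$, and the secrecy prelog is $\frac{1-\alpha}{2}$ rather than $\frac{1}{2}$. First I would start from the outage event $\mathbb{P}(R_{\mathrm{sec}}<R_{\mathrm{th}})$ with $R_{\mathrm{sec}}$ from \eqref{eq:sec_capTS}; dividing through by $\frac{1-\alpha}{2}$ turns the threshold into $\delta=2^{\frac{2R_{\mathrm{th}}}{1-\alpha}}$, which explains that exponent. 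Then I substitute $\gamma_{\mathcal{R}}$ from \eqref{eq:snr_relTS} and the high-SNR form of $\gamma_{\mathcal{D}}$ obtained from \eqref{eq:snr_desTS} by dropping the $\frac{N_0^2(1-\alpha)}{P(|h_{\mathcal{S}\mathcal{R}}|^2+|h_{\mathcal{D}\mathcal{R}}|^2)}$ term, exactly as \eqref{eq:snr_des_appr} was obtained for PS. This yields
$\gamma_{\mathcal{D}}\approx \frac{2\eta\alpha P\,|h_{\mathcal{S}\mathcal{R}}|^2|h_{\mathcal{R}\mathcal{D}}|^2}{N_0(2\eta\alpha|h_{\mathcal{R}\mathcal{D}}|^2+(1-\alpha))}$, and $\gamma_{\mathcal{R}}=\frac{P|h_{\mathcal{S}\mathcal{R}}|^2}{P|h_{\mathcal{R}\mathcal{D}}|^2+N_0}$.

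Next I would fix $|h_{\mathcal{R}\mathcal{D}}|^2=x$ and treat the outage condition $\frac{1+\gamma_{\mathcal{D}}}{1+\gamma_{\mathcal{R}}}<\delta$ as a linear inequality in $|h_{\mathcal{S}\mathcal{R}}|^2$. Writing $\gamma_{\mathcal{D}}=c_{\mathcal D}(x)\,|h_{\mathcal{S}\mathcal{R}}|^2$ with $c_{\mathcal D}(x)=\frac{2\eta\alpha P x}{N_0(2\eta\alpha x+(1-\alpha))}$ and $\gamma_{\mathcal{R}}=c_{\mathcal R}(x)\,|h_{\mathcal{S}\mathcal{R}}|^2$ with $c_{\mathcal R}(x)=\frac{P}{P x+N_0}$, the event $R_{\mathrm{sec}}\geq R_{\mathrm{th}}$ becomes $1+c_{\mathcal D}(x)t\geq\delta(1+c_{\mathcal R}(x)t)$ for $t=|h_{\mathcal{S}\mathcal{R}}|^2$, i.e. $t\geq\frac{\delta-1}{c_{\mathcal D}(x)-\delta c_{\mathcal R}(x)}=:\frac{\delta-1}{\nu(x)}$ provided $\nu(x):=c_{\mathcal D}(x)-\delta c_{\mathcal R}(x)>0$; note $\nu(x)$ here is precisely the stated $\left(\frac{2\eta\alpha Px}{N_0(2\eta\alpha x+(1-\alpha))}-\frac{P\delta}{Px+N_0}\right)$ (with no $(1-\beta)$ prefactor, since the relay uses full power). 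The condition $\nu(x)>0$ is a quadratic-in-$x$ condition whose relevant crossing point gives the lower limit $\theta_1$ of the $x$-integral; solving $c_{\mathcal D}(x)=\delta c_{\mathcal R}(x)$ reduces to $2\eta\alpha Px(Px+N_0)=\delta P N_0(2\eta\alpha x+(1-\alpha))$, a quadratic in $x$ whose positive root is exactly the displayed $\theta_1=\frac{(\delta-1)+\sqrt{(\delta-1)^2+4\delta\frac{P(1-\alpha)}{2\eta\alpha N_0}}}{2(P/N_0)}$ after collecting terms. Then $\mathbb{P}(R_{\mathrm{sec}}\geq R_{\mathrm{th}})=\int_{\theta_1}^{\infty}\mathbb{P}\!\big(|h_{\mathcal{S}\mathcal{R}}|^2\geq\frac{\delta-1}{\nu(x)}\big)f_{|h_{\mathcal{R}\mathcal{D}}|^2}(x)\,\mathrm dx=\frac{1}{\lambda_{\mathcal{RD}}}\int_{\theta_1}^{\infty}e^{-\frac{\delta-1}{\nu(x)\lambda_{\mathcal{SR}}}-\frac{x}{\lambda_{\mathcal{RD}}}}\,\mathrm dx$ by the exponential CDF, and the outage probability is one minus this — which is exactly \eqref{eq:out_fin} with the TS values of $\delta,\theta_1,\nu(x)$.

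The main obstacle, and the only place that demands genuine care rather than mechanical substitution, is the sign analysis of $\nu(x)$: one must verify that on $[\theta_1,\infty)$ the function $\nu(x)$ is indeed positive (so that the inequality in $t$ flips the right way and $\frac{\delta-1}{\nu(x)}$ is a legitimate nonnegative threshold), and that $\theta_1$ as written is the correct (positive, relevant) root of the associated quadratic rather than the spurious one. I would handle this by examining $\nu$ as a rational function: its numerator, after clearing the common denominator $N_0(2\eta\alpha x+(1-\alpha))(Px+N_0)$, is a quadratic in $x$ with positive leading coefficient and a single positive root $\theta_1$, so $\nu(x)\geq0$ iff $x\geq\theta_1$; I also note $\theta_1$ can be read off directly as the positive solution since by construction $\delta>1$ makes the discriminant positive. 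Everything else — substituting the PDF, recognizing the integral as the one already displayed in \eqref{eq:out_fin}, and invoking the same high-SNR justification and the same remark that the integrand is elementary — is identical to Appendix~\ref{app:out_ex} and need not be repeated; I would simply point to that appendix for the shared steps.
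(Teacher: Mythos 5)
Your proposal is correct and follows essentially the same route as the paper, whose proof of Proposition~\ref{prop:out_exTS} simply repeats the Appendix~\ref{app:out_ex} argument with the TS substitutions ($\delta = 2^{\frac{2R_{\mathrm{th}}}{1-\alpha}}$, the full received power at the relay, and the $\frac{2\eta\alpha}{1-\alpha}$ transmit-power factor). Your sign analysis of $\nu(x)$ via the numerator quadratic with positive leading coefficient and negative constant term, and your identification of $\theta_1$ as its unique positive root, match the structure of the paper's PS derivation, so no gap remains.
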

\begin{proof}
The proof follows the same steps used in Appendix~\ref{app:out_ex} to derive the secrecy outage probability for PS policy in Proposition \ref{prop:out_ex}. Thus, we skip the proof for TS policy for brevity. 
\end{proof}
Note that, for TS policy, the secrecy outage probability under high SNR approximation as given by \eqref{eq:out_fin} is obtained by approximating the exact expression of $\gamma_{\mathcal{D}}$ in \eqref{eq:snr_desTS} as
\begin{equation}
\gamma_{\mathcal{D}} \approx \frac{2 \eta \alpha P|h_{\mathcal{S}\mathcal{R}}|^{2}|h_{\mathcal{R}\mathcal{D}}|^{2}}{N_0\left(2 \eta \alpha |h_{\mathcal{R}\mathcal{D}}|^{2} + (1-\alpha)\right)},
\label{eq:snr_des_apprTS}
\end{equation}
where we have used the channel reciprocity, i.e., $h_{\mathcal{R}\mathcal{D}} = h_{\mathcal{D}\mathcal{R}}$. We have obtained \eqref{eq:snr_des_apprTS} from the exact expression of received SNR at the destination given in \eqref{eq:snr_desTS} by neglecting the term $\frac{N_0^2 (1-\alpha)}{ \left(P_{\mathcal{S}} |h_{\mathcal{S}\mathcal{R}}|^2 + P_{\mathcal{D}} |h_{\mathcal{D}\mathcal{R}}|^2\right)}$ in the denominator of \eqref{eq:snr_desTS} due to negligible value of $N_0^2$ at high SNR. Now, considering the power outage probability, we can finally write the total secrecy outage probability as \eqref{eq:tot_out}. Note that the power outage probability for PS and TS policies is the same. 
\subsubsection{Probability of Positive Secrecy Rate}
The following proposition gives the analytical expression for $P_{\mathrm{pos}}$.
\begin{proposition}

We can write $P_{\mathrm{pos}}$ as \eqref{eq:ppos}, where $\theta_2 = \mathcal{A}$, $\theta_3$ is given by \eqref{eq:theta2}
with $\mathcal{A} = \frac{N_0 (1-\alpha)}{2 \eta \alpha P}$ and $\mathcal{B} = \frac{N_0^2 (1-\alpha)}{2 \eta \alpha P^2}$, and 
\begin{equation*}
\psi(x) = \frac{N_0^2}{P\left(\frac{2 \eta \alpha}{1-\alpha} P x^2 - N_0\right)} - x.
\end{equation*}
\end{proposition}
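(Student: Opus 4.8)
The plan is to mirror, almost verbatim, the derivation that produces Proposition~\ref{prop:ppos} for the PS policy (carried out in Appendix~\ref{app:ppos}), since the TS setup differs from the PS setup only in the precise form of the received SNRs $\gamma_{\mathcal{R}}$ and $\gamma_{\mathcal{D}}$. First I would write the event of strictly positive secrecy rate as $R_{\mathrm{sec}}>0$, which by \eqref{eq:sec_capTS} (since $0\le\alpha\le 1$ leaves the sign of the bracket unchanged) is equivalent to $\gamma_{\mathcal{D}}>\gamma_{\mathcal{R}}$. Substituting $\gamma_{\mathcal{R}}$ from \eqref{eq:snr_relTS} with $P_{\mathcal{S}}=P_{\mathcal{D}}=P$ and using the high-SNR approximation $\gamma_{\mathcal{D}}$ from \eqref{eq:snr_des_apprTS} together with channel reciprocity $|h_{\mathcal{D}\mathcal{R}}|^{2}=|h_{\mathcal{R}\mathcal{D}}|^{2}$, I would set $x=|h_{\mathcal{R}\mathcal{D}}|^{2}$, $y=|h_{\mathcal{S}\mathcal{R}}|^{2}$ and reduce the inequality $\gamma_{\mathcal{D}}>\gamma_{\mathcal{R}}$ to a condition of the form $y > \psi(x)$ (or, where the coefficient of $y$ flips sign, a condition that holds for all $y\ge 0$ or for no $y\ge 0$). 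A short algebraic manipulation should show this $\psi(x)$ is exactly $\psi(x)=\frac{N_0^2}{P\left(\frac{2\eta\alpha}{1-\alpha}Px^2-N_0\right)}-x$, with the denominator vanishing at the positive root of $\frac{2\eta\alpha}{1-\alpha}Px^{2}-N_0=0$, i.e.\ $x=\sqrt{\tfrac{N_0(1-\alpha)}{2\eta\alpha P}}=\sqrt{\mathcal{A}}$, which identifies $\theta_2=\mathcal{A}$.

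Next I would determine the sign pattern of $\psi$, exactly as in \eqref{eq:binary}: for $0\le x<\theta_2$ the denominator $\frac{2\eta\alpha}{1-\alpha}Px^{2}-N_0$ is negative, making the first term negative and hence $\psi(x)<0$; for $x$ just above $\theta_2$ the first term is large and positive so $\psi(x)\ge 0$; and since $\psi(x)=0$ rearranges to the cubic $x^{3}-\mathcal{A}x-\mathcal{B}=0$ with $\mathcal{A}=\frac{N_0(1-\alpha)}{2\eta\alpha P}$, $\mathcal{B}=\frac{N_0^{2}(1-\alpha)}{2\eta\alpha P^{2}}$, the (unique) relevant positive real root is $\theta_3$ given by Cardano's formula \eqref{eq:theta2}; beyond $\theta_3$ one has $\psi(x)<0$ again. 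Thus, conditioned on the energy-harvesting circuitry being active (probability $1-P_{\mathrm{p,out}}$, the same $P_{\mathrm{p,out}}$ as in the PS case since the received-power expression is unchanged), $P_{\mathrm{pos}} = (1-P_{\mathrm{p,out}})\,\mathbb{P}\!\left(|h_{\mathcal{S}\mathcal{R}}|^{2}>\psi(|h_{\mathcal{R}\mathcal{D}}|^{2})\right)$. Splitting the $x$-integral over $[0,\theta_2)$, $[\theta_2,\theta_3]$, and $(\theta_3,\infty)$ and using the exponential CDF of $|h_{\mathcal{S}\mathcal{R}}|^{2}$ (the $x<0$ regions contributing $\mathbb{P}(|h_{\mathcal{S}\mathcal{R}}|^{2}>\text{negative})=1$) reproduces the exact form \eqref{eq:sec_pos2}; the high-SNR form \eqref{eq:sec_pos_apprx} follows by the same large-SNR simplification used in Appendix~\ref{app:ppos}, dominated by the boundary term at $\theta_2$. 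Since all of this is mechanically identical to the PS derivation with $\eta\beta(1-\beta)$ replaced by $\frac{2\eta\alpha}{1-\alpha}$-type factors, I would simply state that the proof follows the steps of Appendix~\ref{app:ppos} mutatis mutandis and omit the repetition.

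The only genuine obstacle is bookkeeping: one must be careful about the sign of the coefficient multiplying $|h_{\mathcal{S}\mathcal{R}}|^{2}$ when clearing denominators in $\gamma_{\mathcal{D}}>\gamma_{\mathcal{R}}$, because that sign flips precisely at $x=\theta_2$ and at $x=\theta_3$, and it is this flip (not a monotonicity of $\psi$) that produces the three-interval structure \eqref{eq:binary}. As long as the cross-multiplication is done separately on each sign regime and the resulting cubic is checked to have the claimed single positive root in $(\theta_2,\infty)$ (discriminant $(\mathcal{B}/2)^2+(-\mathcal{A}/3)^3>0$ under the stated positivity of all physical parameters, so Cardano gives one real root), the stated expressions for $\theta_2$, $\theta_3$, and $\psi$ follow, and no new analytic difficulty arises beyond what Appendix~\ref{app:ppos} already handled.
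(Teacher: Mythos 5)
Your overall plan---repeat Appendix~\ref{app:ppos} mutatis mutandis with the TS parameters---is exactly what the paper does (its ``proof'' is literally a pointer to that appendix), and your sign analysis, the three-interval structure, the cubic $x^3-\mathcal{A}x-\mathcal{B}=0$ solved by Cardano, and the reuse of the same $P_{\mathrm{p,out}}$ are all the right ingredients. However, there is a concrete error in the key algebraic step: you propose to obtain $\psi(x)$ by substituting the \emph{high-SNR approximation} of $\gamma_{\mathcal{D}}$ from \eqref{eq:snr_des_apprTS} into $\mathbb{P}(\gamma_{\mathcal{D}}>\gamma_{\mathcal{R}})$. With that approximation the inequality $\gamma_{\mathcal{D}}>\gamma_{\mathcal{R}}$ collapses, after cancelling $P|h_{\mathcal{S}\mathcal{R}}|^{2}$ and cross-multiplying, to $\frac{2\eta\alpha}{1-\alpha}P|h_{\mathcal{R}\mathcal{D}}|^{4}>N_0$, i.e.\ a threshold on $|h_{\mathcal{R}\mathcal{D}}|^{2}$ alone, with no dependence on $|h_{\mathcal{S}\mathcal{R}}|^{2}$; no function $\psi$ containing the $N_0^2$ term can emerge from it. The term $\frac{N_0^2}{P\left(\frac{2\eta\alpha}{1-\alpha}Px^{2}-N_0\right)}$ in $\psi(x)$ comes precisely from the $\frac{N_0^2(1-\alpha)}{P(|h_{\mathcal{S}\mathcal{R}}|^{2}+|h_{\mathcal{D}\mathcal{R}}|^{2})}$ term in the \emph{exact} SNR \eqref{eq:snr_desTS}, which is what Appendix~\ref{app:ppos} uses (via \eqref{eq:snr_des}) to get the exact expression \eqref{eq:sec_pos2}; the approximate SNR is used only afterwards, and only to get \eqref{eq:sec_pos_apprx} directly as $\mathbb{P}\bigl(|h_{\mathcal{R}\mathcal{D}}|^{2}>\sqrt{\mathcal{A}}\bigr)$. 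So you have swapped the roles of the exact and approximate SNRs: as written, your derivation would prove only \eqref{eq:sec_pos_apprx}, and your claim that \eqref{eq:sec_pos_apprx} then follows from the exact integral because it is ``dominated by the boundary term at $\theta_2$'' is not how either bound is obtained and is not substantiated. The fix is simple---start from \eqref{eq:snr_desTS} and \eqref{eq:snr_relTS}, cancel $P|h_{\mathcal{S}\mathcal{R}}|^{2}$, and arrive at $(|h_{\mathcal{S}\mathcal{R}}|^{2}+|h_{\mathcal{R}\mathcal{D}}|^{2})\,P\bigl(\tfrac{2\eta\alpha}{1-\alpha}P|h_{\mathcal{R}\mathcal{D}}|^{4}-N_0\bigr)>N_0^{2}$---after which your conditioning on $|h_{\mathcal{R}\mathcal{D}}|^{2}=x$ and the split over $[0,\theta_2)$, $[\theta_2,\theta_3]$, $(\theta_3,\infty)$ goes through exactly as you describe.

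A secondary, minor point: your assertion that the Cardano discriminant $(\mathcal{B}/2)^{2}+(-\mathcal{A}/3)^{3}$ is positive ``under the stated positivity of all physical parameters'' is not true in general; at high transmit SNR one has $\mathcal{B}^{2}/4<\mathcal{A}^{3}/27$ and the cubic has three real roots (casus irreducibilis). What is true, and what the argument actually needs, is that $x^{3}-\mathcal{A}x-\mathcal{B}$ with $\mathcal{A},\mathcal{B}>0$ has exactly one positive real root (it is negative at $x=0$ and eventually increasing), so $\theta_3$ is well defined; the paper itself applies Cardano's formula without addressing this, so it is not a deviation from the reference proof, but you should not base the uniqueness claim on discriminant positivity.
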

\begin{proof}

The proof follows the same steps used in Appendix~\ref{app:ppos} for PS policy. We skip the proof for TS policy for brevity.
\end{proof}
\subsubsection{Ergodic Secrecy Rate}
With the inclusion of the power outage probability $P_{\mathrm{p,out}}$ given in \eqref{eq:eout}, the ergodic secrecy rate is calculated by averaging the instantaneous secrecy rate over all possible channel realizations and is given as
\begin{align}
\bar{R}_{\mathrm{sec}}& = (1 - P_{\mathrm{p,out}})\mathbb{E}\lbrace R_{\mathrm{sec}} \rbrace  \nonumber \\
&= (1 - P_{\mathrm{p,out}}\mathbb{E}\left\lbrace\frac{(1-\alpha)}{2}\left[\log_2\left(\frac{1 + \gamma_{\mathcal{D}}}{1 + \gamma_{\mathcal{R}}}\right)\right]^{+} \right\rbrace.
\label{eq:esc1TS}
\end{align}
Using \eqref{eq:snr_relTS} and \eqref{eq:snr_desTS} in \eqref{eq:esc1TS}, we can write the analytical expression for $\bar{R}_{\mathrm{sec}}$ as\vspace*{-3mm}

{{\small
\begin{align}
\!\bar{R}_{\mathrm{sec}} &\!=(1 - P_{\mathrm{p,out}}) \nonumber \\
& \!\! \times \int_{x = 0}^{\infty}\!\int_{y = 0}^{\infty}\! \left[\!\frac{(1-\alpha)}{2}\log_2 \!\! \left(\!\frac{1 + \frac{2 \eta \alpha P xy}{2 \eta \alpha  N_0 y+ N_0 (1-\alpha) +\frac{ N_0^2 (1-\alpha)}{P\left(x +  y\right)} }}{1 + \frac{Px}{Py + N_0}}\!\right)\!\!\right]^{+} \nonumber\\
& \times f_{|h_{\mathcal{S}\mathcal{R}}|^{2}}(x)f_{|h_{\mathcal{R}\mathcal{D}}|^{2}}(y)\, \mathrm{d}x \, \mathrm{d}y.
\label{eq:csecTS_exact}
\end{align}}}\vspace*{-3mm}

\noindent Using high SNR approximation for $\gamma_{\mathcal{D}}$ as given in \eqref{eq:snr_des_apprTS}, we can write $\bar{R}_{\mathrm{sec}}$ as
\begin{align}
\bar{R}_{\mathrm{sec}} &\approx (1 - P_{\mathrm{p,out}}) \nonumber \\
& \times \int_{x = 0}^{\infty}\int_{y = 0}^{\infty} \left[\frac{(1-\alpha)}{2}\log_2\left(\frac{1 + \frac{2 \eta \alpha P xy}{2 \eta \alpha  N_0 y+ N_0 (1-\alpha) }}{1 + \frac{Px}{Py + N_0}}\right)\right]^{+} \nonumber \\
& \times f_{|h_{\mathcal{S}\mathcal{R}}|^{2}}(x)f_{|h_{\mathcal{R}\mathcal{D}}|^{2}}(y)\, \mathrm{d}x \, \mathrm{d}y.
\label{eq:csecTS}
\end{align}
Both~\eqref{eq:csecTS_exact} and \eqref{eq:csecTS} do not admit a closed form. Alternatively, we present a closed-form lower bound on~\eqref{eq:csecTS} as given in the following Proposition.
 \begin{proposition}
We lower bound the ergodic secrecy rate $\bar{R}_{\mathrm{sec}}$ in \eqref{eq:csecTS} by
\begin{equation}
\bar{R}_{\mathrm{sec}} \geq (1 - P_{\mathrm{p,out}})\max\left(\frac{1-\alpha}{2\ln(2)}(T_1 - T_2), 0\right),
\label{eq:lesc_TS}
\end{equation}
where $T_1$ and $T_2$ are given by~\eqref{eq:lesc1} and \eqref{eq:T2_fin}, respectively, with 
$m_x = \frac{P\lambda_{\mathcal{S}\mathcal{R}}}{N_0},$
$m_y  = \frac{P\lambda_{\mathcal{R}\mathcal{D}}}{N_0},$ and 
$m_z = \frac{2 \eta \alpha \lambda_{\mathcal{R}\mathcal{D}}}{1-\alpha}$.
\label{prop:escTS}
\end{proposition}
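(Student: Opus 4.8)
The plan is to mirror the proof of Proposition~\ref{prop:esc} for the PS policy (Appendix~\ref{app:esc}), since the TS expressions in~\eqref{eq:csecTS} are structurally identical to~\eqref{eq:csec} once the PS parameters are renamed. First I would observe that, by~\eqref{eq:sec_capTS}, the prefactor $(1-\alpha)/2$ factors out of the expectation, so it suffices to lower bound $\mathbb{E}\{[\log_2((1+\gamma_{\mathcal{D}})/(1+\gamma_{\mathcal{R}}))]^{+}\}$ and multiply by $(1-\alpha)/(2\ln 2)$ after converting $\log_2$ to $\ln$. Dropping the $[\cdot]^{+}$ operator only decreases the quantity (Jensen-type bound for the convex function $\max(\cdot,0)$ applied in reverse: actually we use $\mathbb{E}\{[Z]^{+}\}\geq \max(\mathbb{E}\{Z\},0)$, which follows from $[z]^{+}\geq z$ and $[z]^{+}\geq 0$), so it remains to bound $\mathbb{E}\{\ln(1+\gamma_{\mathcal{D}})\}-\mathbb{E}\{\ln(1+\gamma_{\mathcal{R}})\}$ from below by $T_1-T_2$.

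Next I would handle the two terms separately. For the second term, $\mathbb{E}\{\ln(1+\gamma_{\mathcal{R}})\}$ with $\gamma_{\mathcal{R}}=P|h_{\mathcal{S}\mathcal{R}}|^2/(P|h_{\mathcal{R}\mathcal{D}}|^2+N_0)$ from~\eqref{eq:snr_relTS}; writing $X=|h_{\mathcal{S}\mathcal{R}}|^2$, $Y=|h_{\mathcal{R}\mathcal{D}}|^2$ exponential with means $\lambda_{\mathcal{SR}},\lambda_{\mathcal{RD}}$, the ratio of independent (shifted) exponentials integrates in closed form to the $\mathrm{Ei}$-expression $T_2$ in~\eqref{eq:T2_fin}, with the case split according to whether $m_y/m_x=\lambda_{\mathcal{RD}}/\lambda_{\mathcal{SR}}=1$; this is exactly the PS computation with $m_x=P\lambda_{\mathcal{SR}}/N_0$, $m_y=P\lambda_{\mathcal{RD}}/N_0$ (the $(1-\beta)$ factors of the PS case are simply absent here, which is why $m_x,m_y$ lose the $(1-\beta)$). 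For the first term, I would use the high-SNR form of $\gamma_{\mathcal{D}}$ in~\eqref{eq:snr_des_apprTS}, apply the standard bound $\mathbb{E}\{\ln(1+e^{W})\}\geq \ln(1+e^{\mathbb{E}\{W\}})$ (Jensen, since $w\mapsto\ln(1+e^{w})$ is convex) with $W=\ln\gamma_{\mathcal{D}}$, and then evaluate $\mathbb{E}\{\ln\gamma_{\mathcal{D}}\}=\mathbb{E}\{\ln X\}+\mathbb{E}\{\ln(2\eta\alpha P Y/N_0)\}-\mathbb{E}\{\ln(2\eta\alpha Y+(1-\alpha))\}$. The terms $\mathbb{E}\{\ln X\}=\ln\lambda_{\mathcal{SR}}-\phi$ and similar give the $-2\phi$ and $\ln(m_x m_z)$ contributions, while $\mathbb{E}\{\ln(2\eta\alpha Y+(1-\alpha))\}$ produces the $\exp(1/m_z)+\mathrm{Ei}(-1/m_z)$ term with $m_z=2\eta\alpha\lambda_{\mathcal{RD}}/(1-\alpha)$ via the identity $\int_0^\infty \ln(1+cy)\,\tfrac{1}{\lambda}e^{-y/\lambda}\mathrm{d}y=-e^{1/(c\lambda)}\mathrm{Ei}(-1/(c\lambda))$. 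Collecting these yields $T_1$ as in~\eqref{eq:lesc1} (the ``$\geq$'' there reflecting the Jensen step), and combining with the $(1-\alpha)$-prefactor and the $\max(\cdot,0)$ and $(1-P_{\mathrm{p,out}})$ factors gives~\eqref{eq:lesc_TS}.

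Since the paper explicitly invites reuse, the cleanest write-up is: ``The proof follows the same steps as in Appendix~\ref{app:esc} for Proposition~\ref{prop:esc}, with the substitutions $(1-\beta)\to 1$ in the source-relay SNR and $\eta\beta(1-\beta)\to 2\eta\alpha$, $\eta\beta/(1-\beta)\to 2\eta\alpha/(1-\alpha)$ in the destination SNR, and with the overall prefactor $1/2$ replaced by $(1-\alpha)/2$; we omit the details for brevity.'' I would verify that under these substitutions the defining relations $m_x=(1-\beta)P\lambda_{\mathcal{SR}}/N_0\to P\lambda_{\mathcal{SR}}/N_0$, $m_y\to P\lambda_{\mathcal{RD}}/N_0$, $m_z=\eta\beta\lambda_{\mathcal{RD}}/(1-\beta)\to 2\eta\alpha\lambda_{\mathcal{RD}}/(1-\alpha)$ match the claimed values, which they do.

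The main obstacle is not any single integral---each is elementary given the $\mathrm{Ei}$ identities from~\cite{grad}---but rather the bookkeeping of the Jensen inequality direction: one must be careful that $\mathbb{E}\{[Z]^{+}\}\geq[\mathbb{E}\{Z\}]^{+}$ and $\mathbb{E}\{\ln(1+e^{W})\}\geq\ln(1+e^{\mathbb{E} W})$ both point the ``right'' way so that the chain of inequalities consistently produces a \emph{lower} bound, and that the subtracted term $T_2$ is computed \emph{exactly} (not bounded) so it does not spoil the direction. A secondary subtlety, already flagged in the PS case, is that $T_1$ itself is only a lower bound (hence the ``$\geq$'' in~\eqref{eq:lesc1}) because an intermediate $\ln(1+\cdot)$ term is further bounded; one should make sure the same intermediate step is invoked here so that~\eqref{eq:lesc_TS} is genuinely valid and not merely an approximation.
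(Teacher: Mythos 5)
Your proposal is correct and follows essentially the same route as the paper, which itself proves this proposition by invoking the PS-policy derivation of Appendix~\ref{app:esc} with the appropriate parameter substitutions; your explicit bookkeeping of the $[\cdot]^{+}$ and Jensen steps, the exact evaluation of $T_2$ via the CDF of $\gamma_{\mathcal{R}}$, and the verification that $m_x$, $m_y$, $m_z$ become $P\lambda_{\mathcal{SR}}/N_0$, $P\lambda_{\mathcal{RD}}/N_0$, $2\eta\alpha\lambda_{\mathcal{RD}}/(1-\alpha)$ matches what the paper intends. No gaps beyond the paper's own level of detail.
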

\begin{proof}
The proof follows the same steps used in Appendix~\ref{app:esc} to derive the lower bound on ergodic secrecy capacity for PS policy in Proposition \ref{prop:esc}. Thus, we skip the proof for TS policy for brevity.
\end{proof}
The lower bound given in \eqref{eq:lesc_TS} is tight in high SNR regime, which is depicted in Fig.~\ref{fig:6} of Section~\ref{sec:results}.

\section{Discussions and Results}
\label{sec:results}
 In this section, we numerically investigate the secrecy performance of source-destination communication via an untrusted wireless energy harvesting relay. For different system parameters like the power splitting ratio, energy harvesting time, transmit SNR, source-relay and relay-destination distances, target secrecy rate, path-loss exponent, and the energy conversion efficiency factor, we discuss how they impact the secrecy outage probability and ergodic secrecy rate under both PS and TS policies.
 
\subsection{System Parameters and Simulation Setup}
Unless otherwise stated, we consider following system parameters. The source power and destination jamming signal power, $P_{S} = P_{D} = P = \mathrm{40}~\mathrm{dBm}$; energy conversion efficiency, $\eta = \mathrm{0.7}$; energy harvesting circuitry activation threshold, $\theta_H = -30~\mathrm{dBm}$~\cite{lu,guo2015}; and noise power, $N_0 = 10^{-4}$. The distances between source and relay and that between relay and destination are $5\mathrm{m}$ each, i.e., $d_{\mathcal{SR}} = d_{\mathcal{RD}} = 5\mathrm{m}$. The mean channel power gains $\lambda_{\mathcal{SR}}$ and $\lambda_{\mathcal{RD}}$ of the exponential random variables $|h_{\mathcal{SR}}|^2$ and $|h_{\mathcal{RD}}|^2$ are $d_{\mathcal{SR}}^{-\rho}$ and $d_{\mathcal{RD}}^{-\rho}$, respectively, where $\rho$ is the path-loss exponent. Unless otherwise stated, $\rho = 2.7$.

\subsection{Effect of power splitting ratio $\beta$ and energy harvesting time $\alpha$}
\subsubsection{Effect of $\beta$} Fig.~\ref{fig:1} shows the effects of the power splitting ratio $\beta$ under PS policy and the energy harvesting time $\alpha$ under TS policy on the secrecy outage probability. For PS policy, with the increase in $\beta$, the secrecy outage probability initially decreases to a minimum value. The value of $\beta$ corresponding to the minimum secrecy outage probability is the optimal value of $\beta$. If we increase $\beta$ further beyond the optimal value, the secrecy outage probability also increases. This is because, as $\beta$ increases, the relay harvests more energy, which in turn, increases the relay's transmit power improving the information reception at the destination. Also, the increased $\beta$ reduces the received signal strength at the relay which degrades the received SNR $\gamma_{\mathcal{R}}$ at the relay. This enhances the secrecy rate of the communication which reduces the secrecy outage probability. But, once $\beta$ crosses the optimal value, the poor signal strength at the relay delivers a negative effect on the secrecy outage probability. Due to the amplification of the poor received signal, the relay forwards a noisy signal to the destination which reduces the received SNR $\gamma_{\mathcal{D}}$ at the destination. The increased harvested energy due to the increased $\beta$, in turn, the higher transmit power of the relay, cannot compensate the loss in $\gamma_{\mathcal{D}}$ because of the reduced signal strength. This pushes the secret source-destination communication into the outage more often, increasing the secrecy outage probability. On the similar line, for Fig.~\ref{fig:4}, we can explain the initial increase of the ergodic secrecy rate with $\beta$ and then its fall after the optimal $\beta$. Figs.~\ref{fig:1} and \ref{fig:4} also show that the simulation results are in excellent agreement with analytical results.

\begin{figure}
\centering
\includegraphics[scale=0.44]{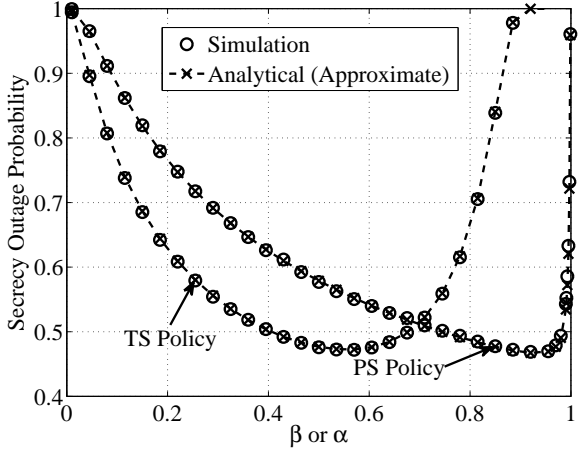}
\caption{Effect of the power splitting ratio $\beta$ and the energy harvesting time $\alpha$ for PS and TS policies, respectively, on the secrecy outage probability, $R_{\mathrm{th}} = \mathrm{0.5}~\mathrm{bits/s/Hz}$.}
\label{fig:1}\vspace*{-3mm}
\end{figure}
\begin{figure}
\centering
\includegraphics[scale=0.44]{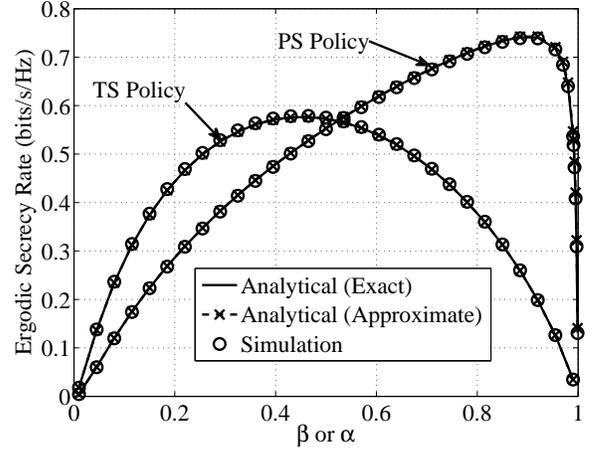}
\caption{Effect of the power splitting ratio $\beta$ and the energy harvesting time $\alpha$ for PS and TS policies, respectively, on the ergodic secrecy rate.}
\label{fig:4}\vspace*{-3mm}
\end{figure}
\subsubsection{Effect of $\alpha$} Fig.~\ref{fig:1} shows that, for TS policy, as the energy harvesting time $\alpha$ increases, the secrecy outage probability reduces initially and reaches the minimum value for the optimal value of $\alpha$. However, the secrecy outage probability begins to increase as $\alpha$ increases beyond its optimal value. This is because, as $\alpha$ increases, the relay spends more time on the energy harvesting, which in turn, increases its transmit power improving the received SNR at the destination. Meanwhile, the increase in $\alpha$ reduces the time available for information processing at both the relay and destination. Now, at the relay, the reduced time for information processing has two opposite effects on the secrecy outage probability. Firstly, it degrades the reception of the signal at the relay and thus deteriorates the eavesdropping channel of the relay improving the secrecy outage probability. On the contrary, since the relay amplifies and forwards the received signal to the destination, the reception at the destination also degrades. Now, when $\alpha$ is less than its optimal value and increasing, the positive effects due to the increased harvested energy at the relay and deterioration of the eavesdropping channel are dominant, and the secrecy outage probability reduces. Once $\alpha$ crosses the optimal value, the effect of the reduced time for information processing becomes dominant, increasing the secrecy outage probability. Similarly, for Fig.~\ref{fig:4}, we can explain the initial increase of the ergodic secrecy rate with $\alpha$ and then its fall after the optimal $\alpha$.

 \begin{figure}
\centering
\includegraphics[scale=0.44]{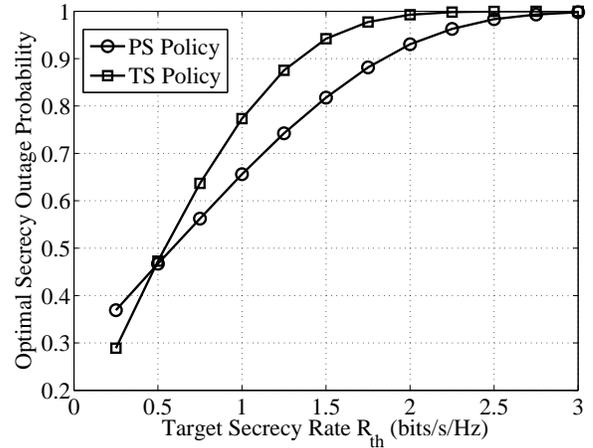}
\caption{Effect of target secrecy rate on the optimal secrecy outage probability for PS and TS policies.}\vspace*{-3mm}
\label{fig:2}
\end{figure}
 
\subsection{Effect of Target Secrecy Rate $R_{\mathrm{th}}$}
Fig.~\ref{fig:2} plots the optimal secrecy outage probability versus the target secrecy rate $R_{\mathrm{th}}$. As the required secrecy rate constraint becomes tighter, the optimal secrecy outage probability increases. This is because, the higher $R_{\mathrm{th}}$ is set, the more it becomes difficult to satisfy, and the likelihood of the secure communication between the source and the destination running into the outage increases. Fig.~\ref{fig:2} also shows that TS policy achieves lower secrecy outage probability at low $R_{\mathrm{th}}$ (till $\mathrm{0.5~\mathrm{bits/s/Hz}}$) than that of PS policy. On the contrary, at higher secrecy rate constraint, PS policy outperforms TS policy.

\subsection{Effect of Transmit Signal-to-Noise Ratio (SNR)}

Fig.~\ref{fig:3} illustrates the effect of the transmit SNR, i.e., $P/N_0$, on the optimal secrecy outage probability for both PS and TS policies. For a fixed noise power $N_0$, the variation in transmit SNR is equivalent to the variation of source's and destination's power $P$. The increase in transmit SNR has its constructive as well as destructive effects on the secure communication. The increase in transmit SNR increases the signal strengths of both information signal from the source and jamming signal from the destination. From the expressions of received SNR $\gamma_{\mathcal{R}}$ at the relay given by \eqref{eq:snr_rel} and \eqref{eq:snr_relTS} for PS and TS policies, respectively, we can note that $\gamma_{\mathcal{R}}$ increases with the increase in transmit SNR. This increases the chances of the untrusted relay decoding the information, which leads to the increase in the secrecy outage probability. On the other hand, the increase in transmit SNR increases the energy harvested by the relay due to higher received powers from information and jamming signals. This causes an increase in the relay's transmit power, which improves SNR at the destination. Also, when relay amplifies and forwards its received signal to the destination, the signal strength is further improved due to the increased signal strength at the relay as a result of the increased transmit SNR. As Fig.~\ref{fig:3} shows, the increase in transmit SNR has an overall positive impact on the secrecy performance of the system.

\begin{figure}
\centering
\includegraphics[scale=0.44]{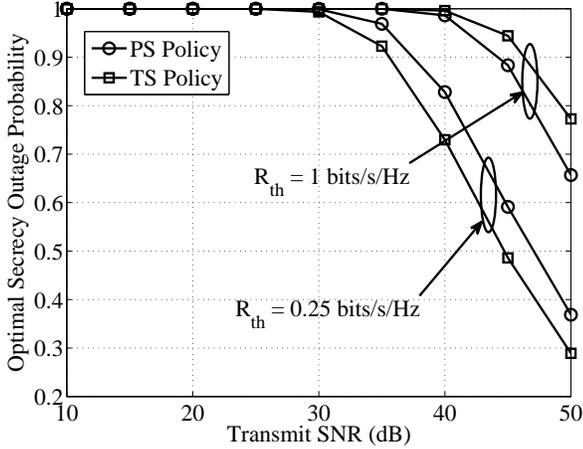}
\caption{Optimal secrecy outage probability versus transmit SNR ($P/N_0$) for PS and TS policies, $N_0 = -10~\mathrm{dBm}$.}\vspace*{-4mm}
\label{fig:3}
\end{figure}

 \begin{figure}
\centering
\includegraphics[scale=0.44]{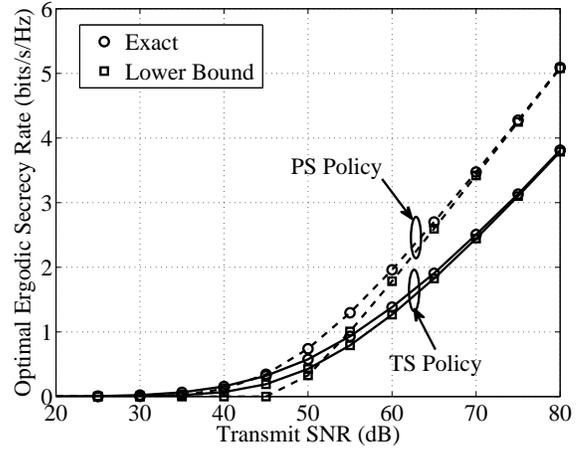}
\caption{Optimal ergodic secrecy rate versus transmit SNR ($P/N_0$) for PS and TS policies, $N_0 = -10~\mathrm{dBm}$.}\vspace*{-4mm}
\label{fig:6}
\end{figure}

Similarly, Fig.~\ref{fig:6} shows that the optimal ergodic secrecy rate improves with the increase in transmit SNR. One interesting observation is that, at lower transmit SNR values, TS policy achieves better optimal ergodic secrecy rate than that of PS policy. On the other hand, at higher transmit SNR, PS policy attains higher ergodic secrecy rate compared to TS policy. From Fig.~\ref{fig:6}, we can note that, with the increase in transmit SNR, the performance with the closed-form lower bound on the ergodic secrecy rate approaches the performance with the exact analytical expression. Thus, the closed-form lower bound is tight at high transmit SNR for both PS and TS policies.

 \begin{figure}
\centering
\includegraphics[scale=0.44]{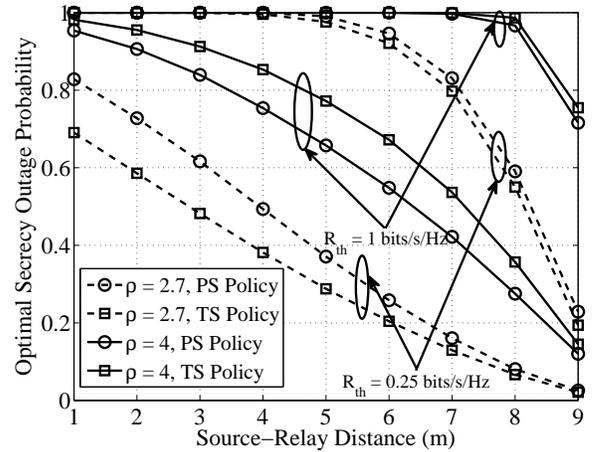}
\caption{Effect of relay placement on the optimal secrecy outage probability for PS and TS policies with different path-loss exponents $\rho = 2.7, 4$.}\vspace*{-3mm}
\label{fig:7}
\end{figure}

\subsection{Effect of Relay Placement}
Fig.~\ref{fig:7} depicts the effect of the relay placement on the optimal secrecy outage probability for different target secrecy rates and path-loss exponents $\rho$ under both PS and TS policies. We vary the source-relay distance $d_{\mathcal{SR}}$, while the relay-destination distance $d_{\mathcal{RD}}$ is $\mathrm{10} - d_{\mathcal{SR}}$. The values of path-loss exponent $\rho$ considered are $\rho = 2.7$ and $4$. Before discussing Fig.~\ref{fig:7}, it is important to understand how $d_{\mathcal{SR}}$ affects the secrecy performance in both constructive and destructive ways. Under both PS and TS policies, as $d_{\mathcal{SR}}$ increases, the received information signal strength at the relay decreases due to the higher path-loss $d_{\mathcal{SR}}^{-\rho}$. This discourages the eavesdropping intention of the untrusted relay, improving the secrecy performance. Also, as $d_{\mathcal{SR}}$ increases, the relay-destination distance $d_{\mathcal{RD}}$ reduces, which makes the received jamming signal at the relay stronger. This further enhances the secrecy performance. The decrease in $d_{\mathcal{RD}}$ brings the relay closer to the destination due to which the lesser amount of harvested energy is sufficient to perform the reliable communication between relay and destination because of the reduced path-loss $d_{\mathcal{RD}}^{-\rho}$. This saving in the energy is important as, the energy harvested by the relay decreases with the increase in $d_{\mathcal{SR}}$. Another negative effect of the increased $d_{\mathcal{SR}}$ on the secrecy performance is that, due to the amplify-and-forward nature of the relay, as the received signal strength at the relay reduces with the increase in $d_{\mathcal{SR}}$, the information signal strength at the destination also deteriorates. This reduces the secrecy rate and thus increases the secrecy outage probability.

  \begin{figure}
\centering
\includegraphics[scale=0.44]{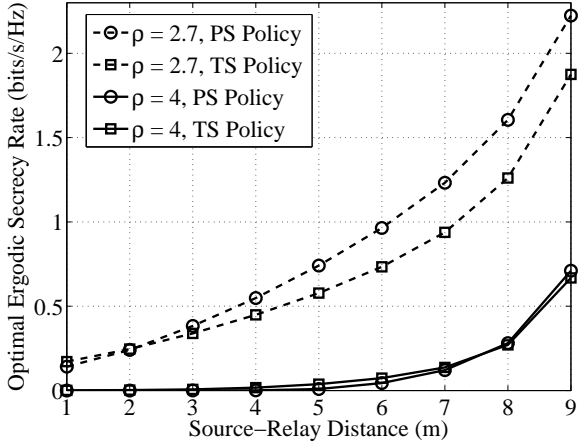}
\caption{Effect of relay placement on the optimal ergodic secrecy rate for PS and TS policies with different path-loss exponents $\rho = 2.7, 4$.}
\label{fig:8}\vspace*{-3mm}
\end{figure}

Fig.~\ref{fig:7} shows that the constructive effects of the increase in $d_{\mathcal{SR}}$ overtake its destructive effects irrespective of the secrecy rate threshold $R_{\mathrm{th}}$ under both PS and TS policies and the optimal secrecy outage probability decreases monotonically with the increase in $d_{\mathcal{SR}}$. Thus, the optimum placement of the relay is closer to the destination.
Note that, in the case of wireless energy harvesting communication via a relay without secrecy constraints, the optimum relay placement is close to the source~\cite{nasir}. But, as shown in Figs.~\ref{fig:7} and \ref{fig:8}, to have secure communication, the relay placement close to the source is not preferred.

Fig.~\ref{fig:8} shows that, for the optimal ergodic secrecy rate, the relay placement has similar effects on the secrecy performance as that on the optimal secrecy outage probability. One interesting observation is that, with the variation in $d_{\mathcal{SR}}$, there exists a crossover point between PS and TS policies, and the location of the crossover point depends on the path-loss exponent. For example, for the path-loss exponent $\rho = 2.7$, TS policy achieves higher optimal ergodic secrecy rate than that of PS policy below $d_{\mathcal{SR}} = 2\mathrm{m}$, i.e., the crossover occurs at $d_{\mathcal{SR}} = 2\mathrm{m}$; while for $\rho = 4$, TS policy achieves higher optimal ergodic secrecy rate than that of PS policy below $d_{\mathcal{SR}} = 8\mathrm{m}$, i.e., the crossover occurs at $d_{\mathcal{SR}} = 8\mathrm{m}$. This is because, at a given path-loss exponent, below the crossover point, the loss in information processing time due to the energy harvesting time in TS policy is lesser than the loss incurred in the relay's transmit power due to power splitting in PS policy. As the distance between relay and destination decreases (with the increase in $d_{\mathcal{SR}}$), the relay may transmit with lower power due to lower path-loss. This subsides the loss incurred in power splitting in PS policy compared to the loss in time for TS policy, and PS policy outperforms TS policy at higher $d_{\mathcal{SR}}$. The increase in path-loss exponent delays the arrival of the crossover point, because, for higher path-loss exponent, the distance between relay and destination should be lower than that in the case of lower path-loss exponent to subside the loss incurred in power splitting. This effect of path-loss exponent on the optimal ergodic secrecy rate can also be seen in Fig.~\ref{fig:new} for different source-relay distances. In addition to the effect of the path-loss exponent on the crossover point, Fig.~\ref{fig:new} shows that the increase in path-loss exponent is detrimental for the secure communication.
\begin{figure}
\centering
\includegraphics[scale=0.44]{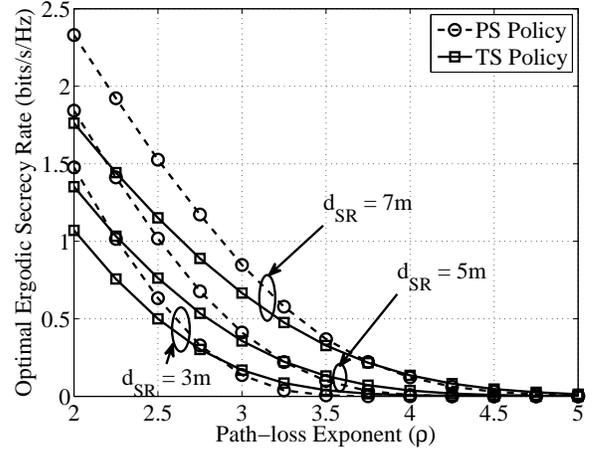}
\caption{Effect of path-loss exponent on the optimal ergodic secrecy rate for PS and TS policies with different source-relay distances $d_{\mathcal{SR}} = 3\mathrm{m}, 5\mathrm{m}, 7\mathrm{m}$.}
\label{fig:new}\vspace*{-3mm}
\end{figure}

\subsection{Effect of Energy Conversion Efficiency Factor $\eta$}
The energy conversion efficiency factor $\eta$ determines what fraction of the received power the relay can actually harvest. Thus, higher $\eta$ allows relay to harvest more energy, which in turn, boosts relay's transmit power. This results in the enhanced received SNR at the destination, reducing the secrecy outage probability and improving the ergodic secrecy rate, as shown in Figs.~\ref{fig:9} and \ref{fig:10}, respectively. At lower $\eta$, TS policy achieves better optimal ergodic secrecy rate than that of PS policy and the trend reverses at higher $\eta$.

\begin{figure}
\centering
    \subfigure[]{\label{fig:9}\includegraphics[scale=0.39]{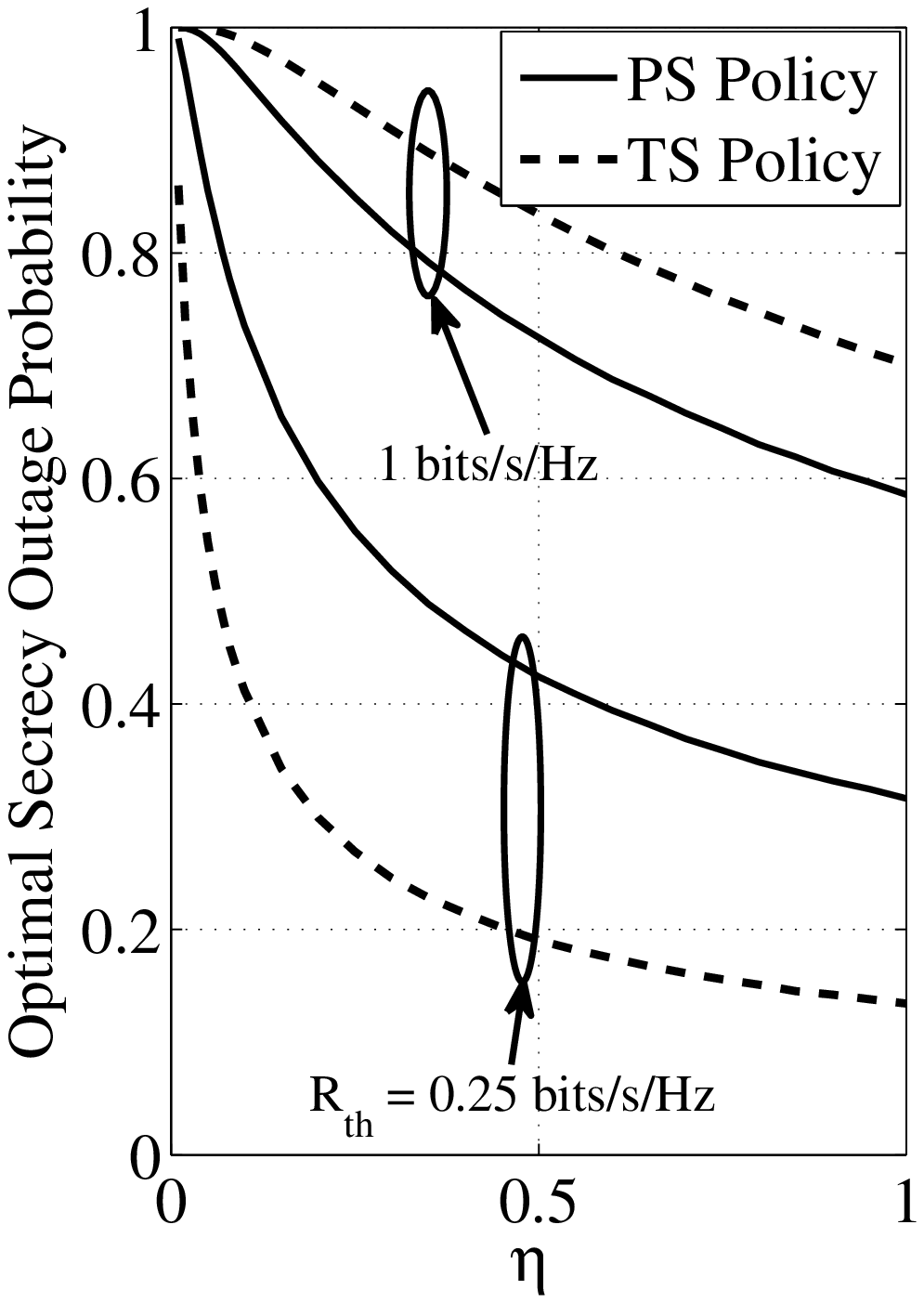}}
    \subfigure[]{\label{fig:10}\includegraphics[scale=0.39]{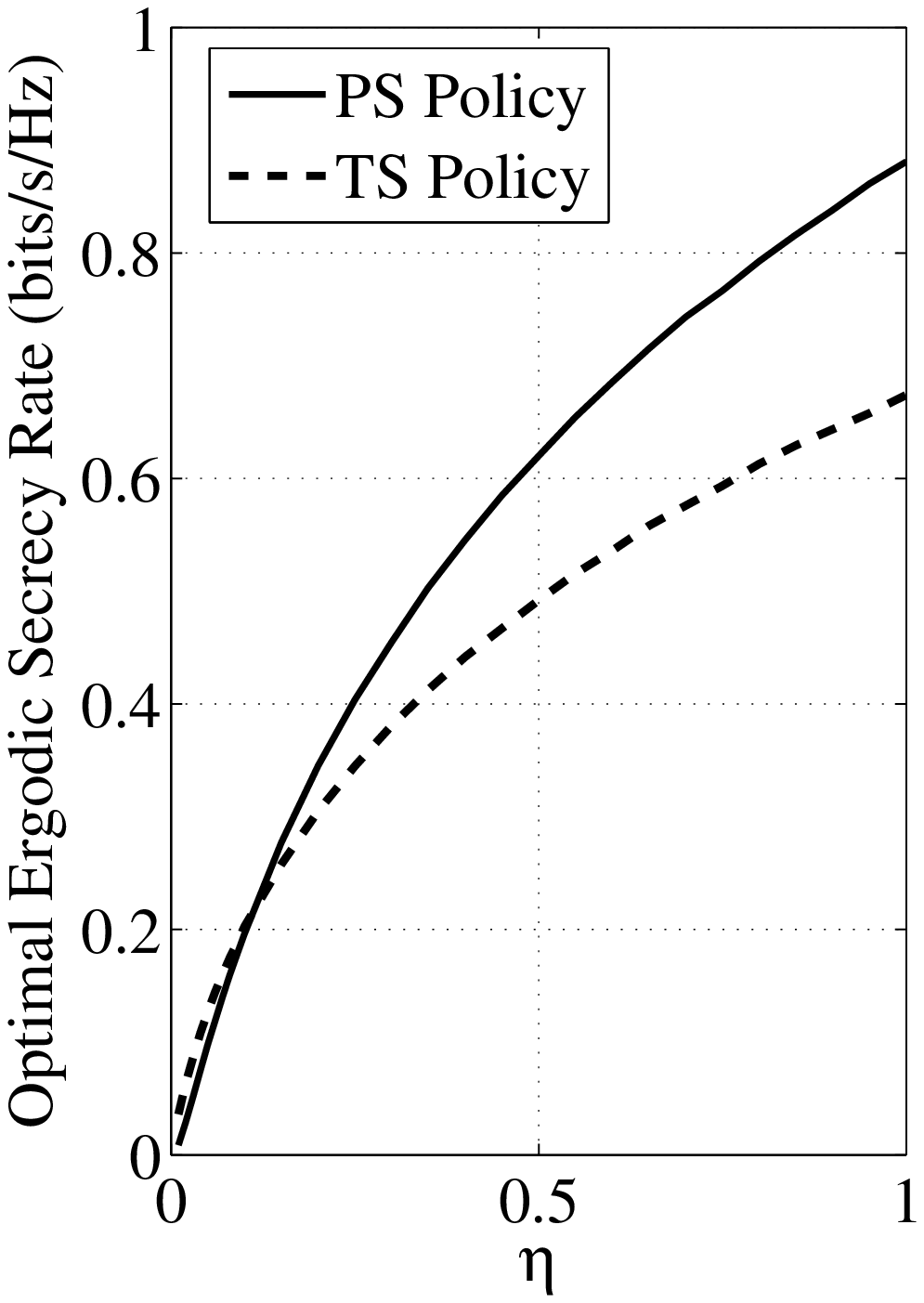}} 
        \caption{Effect of the energy conversion efficiency factor $\eta$ (a) on the optimal secrecy outage probability, (b) on the optimal ergodic secrecy rate.}\vspace*{-3mm}
      \label{fig:eta}
    \end{figure}

\section{Concluding Remarks}
\label{sec:conc}
We have investigated the secrecy performance of the source-destination communication via an energy harvesting amplify-and-forward untrusted relay. The energy-starved relay harvests energy from the received radio-frequency signals. In this case, besides keeping the information confidential from the untrusted relay, the destination-assisted jamming signal supplies energy to relay. This energy augments the energy harvested from the received information signal. The PS and TS policies at the relay enable it to harvest energy and process the received information. For this proposed scenario, we have derived analytical expressions for two secrecy metrics, viz., the secrecy outage probability and the ergodic secrecy rate.

The numerical study of the aforementioned secrecy metrics against different system parameters provides useful design insights. For instance, the variation of power splitting ratio in PS policy and energy harvesting time in TS policy affect the secrecy performance in both constructive and destructive ways. Thus, the optimal power splitting ratio and the optimal energy harvesting time exist, that maximize the secrecy performance in terms of both secrecy metrics. The optimal values of secrecy metrics depend on the system parameters. For example, the higher target secrecy rate we set, the more it becomes difficult to achieve, increasing the optimal secrecy outage probability. Also, at higher target secrecy rate, PS policy outperforms TS policy by achieving the lower optimal secrecy outage probability. Though the increase in transmit SNR increases the possibility of relay decoding the confidential information, the resulting higher harvested energy and the jamming power dominate the negative effect. Thus, the increase in transmit SNR is beneficial to the secure communication. We also observe that, for high transmit SNR, PS policy achieves better ergodic secrecy rate than that of TS policy. The relay location is important in the secure communication. In general, having relay located away from the source is beneficial to keep the information confidential from the relay. This is in contrast with the case of trusted energy harvesting relay, where the relay is preferred to be placed closer to the source. Finally, higher energy conversion efficiency factor increases the harvested energy by the relay, which in turn, improves secrecy performance. In particular, at higher energy conversion efficiency factor, PS policy achieves better optimal ergodic secrecy rate than that of TS policy.

\appendices

\section{Derivation of \eqref{eq:out_fin}}
\label{app:out_ex}
At high SNR, using the channel reciprocity between relay and destination and substituting  $\gamma_{\mathcal{R}}$ from \eqref{eq:snr_rel} and $\gamma_{\mathcal{D}}$ from \eqref{eq:snr_des_appr} in \eqref{eq:sec_cap}, and then using \eqref{eq:sec_cap} and \eqref{eq:outt}, we can write the secrecy outage probability for PS policy as
\begin{align}
P_{\mathrm{out}} &= \mathbb{P}\left(\frac{1 +  \frac{\eta \beta(1-\beta)P|h_{\mathcal{S}\mathcal{R}}|^{2}|h_{\mathcal{R}\mathcal{D}}|^{2}}{N_0\left(\eta \beta |h_{\mathcal{R}\mathcal{D}}|^{2} + (1-\beta)\right)}}{1+ \frac{(1-\beta)P|h_{\mathcal{S}\mathcal{R}}|^{2}}{(1-\beta)P|h_{\mathcal{R}\mathcal{D}}|^{2} + N_0}} < \delta\right), \nonumber \\
&= \mathbb{P}\left(\nu(X)|h_{\mathcal{S}\mathcal{R}}|^{2} < \delta-1\right)\bigg{|_{X = |h_{\mathcal{R}\mathcal{D}}|^{2}}}, 
\label{eq:out22}
\end{align}
where 
\begin{equation}
\nu(x) = (1-\beta)\left(\frac{\eta \beta P x}{N_0 \left(\eta \beta x + (1-\beta)\right)}-\frac{P\delta}{P(1-\beta)x + N_0}\right).
\label{eq:nu}
\end{equation}
Based on the sign of $\nu(X)$, we split~\eqref{eq:out22} as
\begin{align}
P_{\mathrm{out}} &= \mathbb{P}\left(|h_{\mathcal{S}\mathcal{R}}|^{2} < \frac{\delta-1}{\nu(X)}\bigg|\nu(X) \geq 0\right)\mathbb{P}\left(\nu(X) \geq 0\right) \nonumber \\
& + \underset{=\,1}{\underbrace{\mathbb{P}\left(|h_{\mathcal{S}\mathcal{R}}|^{2} \geq \frac{\delta-1}{\nu(X)}\bigg|\nu(X) < 0\right)}}\mathbb{P}\left(\nu(X) < 0\right).
\label{eq:gg1}
\end{align}
In~\eqref{eq:gg1}, $\mathbb{P}\left(|h_{\mathcal{S}\mathcal{R}}|^{2} \geq \frac{\delta-1}{\nu(X)}\bigg|\nu(X) < 0\right) = 1$, because $|h_{\mathcal{SR}}|^2$ being an exponential random variable always takes non-negative values. Also, we have
\begin{equation}
\label{eq:binary1}
\nu(x)  \left\{
  \begin{array}{l l}
    \geq 0, & \quad \text{if}\,\,\,\theta_1 \leq x < \infty \\
    < 0, & \quad \text{if}\,\,\, 0 \leq x <\theta_1,\\
  \end{array} \right.
\end{equation}
where 
\begin{equation}
\theta_1 = \frac{\frac{\delta-1}{1-\beta} + \sqrt{\left(\frac{\delta-1}{1-\beta}\right)^{2} + 4 \delta \frac{P}{\eta \beta N_0}}}{2(P/N_0)}.
\end{equation}
Note that $\theta_{1}$ is the positive root of the equation $\nu(x) = 0$. Using~\eqref{eq:binary1}, we can write \eqref{eq:gg1} as 
\begin{align}
P_{\mathrm{out}} &= \int_{\theta_1}^{\infty}\!\! \left(\! 1- \exp\left(\!-\frac{\delta - 1}{\nu(x)\lambda_{\mathcal{S}\mathcal{R}}}\! \right)\!\!\right) f_{X}(x) \, \mathrm{d}x  + \int_{0}^{\theta_1}\!\!\! f_{X}(x)\, \mathrm{d}x, \nonumber\\
&= \underset{=\,1}{\underbrace{\int_{0}^{\theta_1} f_{X}(x)\, \mathrm{d}x + \int_{\theta_1}^{\infty} f_{X}(x)\, \mathrm{d}x}} \nonumber \\
&- \int_{\theta_1}^{\infty} \left(\exp\left(-\frac{\delta - 1}{\nu(x)\lambda_{\mathcal{S}\mathcal{R}}}\right)\right) f_{X}(x) \, \mathrm{d}x.
\label{eq:out24}
\end{align}
Substituting $f_X(x) = \frac{1}{\lambda_{\mathcal{RD}}}\exp\left(-\frac{x}{\lambda_{\mathcal{RD}}}\right)$ in the third integral of \eqref{eq:out24}, we reach the required expression of $P_{\mathrm{out}}$ as in \eqref{eq:out_fin}.

\section{Proof of Proposition~\ref{prop:eout}}
\label{app:eout}
We can write the power outage probability as
\begin{align}
P_{\mathrm{p,out}}& = \mathbb{P}\left(P_R < \theta_H\right) \nonumber \\
&= \mathbb{P}\left(P(|h_{\mathcal{SR}}|^{2} + |h_{\mathcal{RD}}|^{2}) < \theta_H\right) \nonumber \\
&=\mathbb{P}\left((|h_{\mathcal{SR}}|^{2} + |h_{\mathcal{RD}}|^{2}) < \frac{\theta_H}{P}\right).
\label{eq:eout1}
\end{align}
Let $Z = \left(|h_{\mathcal{SR}}|^2 + |h_{\mathcal{RD}}|^2\right)$. Since $|h_{\mathcal{SR}}|^2$ and $|h_{\mathcal{RD}}|^2$ are exponentially distributed random variables with means $\lambda_{\mathcal{SR}}$ and $\lambda_{\mathcal{RD}}$, we can write the probability density function of $Z$ as~\cite{papoulis}
\begin{equation}
\label{eq:binaryzz1}
f_{Z}(z) =  \left\{
  \begin{array}{l l}
    \frac{\exp\left(-\frac{z}{\lambda_{\mathcal{SR}}}\right)}{\lambda_{\mathcal{SR}}-\lambda_{\mathcal{RD}}}+  \frac{\exp\left(-\frac{z}{\lambda_{\mathcal{RD}}}\right)}{\lambda_{\mathcal{RD}}-\lambda_{\mathcal{SR}}}, & \quad \mathrm{if}\,\, \lambda_{\mathcal{SR}} \neq  \lambda_{\mathcal{RD}} \\
    \left(\frac{1}{\lambda_{\mathcal{SR}}}\right)^{2}z \exp\left(-\frac{z}{\lambda_{\mathcal{SR}}}\right), & \quad \mathrm{if}\,\, \lambda_{\mathcal{SR}} = \lambda_{\mathcal{RD}}.\\
  \end{array} \right.
\end{equation}
Note that $Z$ can take only non-negative values as it is the sum of two exponential random variables. Using \eqref{eq:binaryzz1} in \eqref{eq:eout1}, we can write
\begin{align}
P_{\mathrm{p,out}}& = \mathbb{P}\left(Z < \frac{\theta_H}{P}\right) \nonumber \\
& = \int_{0}^{\frac{\theta_H}{P}} f_Z(z)\mathrm{d}z.
\label{eq:eout2}
\end{align}
Evaluating the integral in \eqref{eq:eout2}, we get the required expression for the power outage probability as in \eqref{eq:eout}.

\section{Proof of Proposition \ref{prop:ppos}}
\label{app:ppos}
\subsection{Proof of \eqref{eq:sec_pos2}}
We can write the probability of achieving the positive secrecy capacity as
\begin{align}
P_{\mathrm{pos}} &= (1 - P_{\mathrm{p,out}})\mathbb{P}\left(R_{\mathrm{sec}} > 0\right) \nonumber \\
&= (1 - P_{\mathrm{p,out}})\mathbb{P}\left(\frac{1}{2}\log_2\left[\frac{\left(1 +\gamma_{\mathcal{D}}\right)}{\left(1 +\gamma_{\mathcal{R}}\right)}  \right]^{+} > 0\right) \nonumber \\
&= (1 - P_{\mathrm{p,out}})\mathbb{P}\left(\gamma_{\mathcal{D}} > \gamma_{\mathcal{R}}\right).
\label{eq:pos11}
\end{align}
Substituting $\gamma_{\mathcal{R}}$ from \eqref{eq:snr_rel} and $\gamma_{\mathcal{D}}$ from \eqref{eq:snr_des} in \eqref{eq:pos11} , we obtain
\begin{align}
\mathbb{P}\left(\gamma_{\mathcal{D}} > \gamma_{\mathcal{R}}\right)& = \mathbb{P}\big[\big(\big(|h_{\mathcal{S}\mathcal{R}}|^{2} + |h_{\mathcal{R}\mathcal{D}}|^{2}\big)  \nonumber \\
&\times P(1-\beta)\big(\eta \beta P|h_{\mathcal{R}\mathcal{D}}|^{4} - N_0\big)\big) > N_0^2\big]. 
\label{eq:sec_pos_exact}
\end{align}
Then we can write
\begin{align}
\mathbb{P}\left(\gamma_{\mathcal{D}} > \gamma_{\mathcal{R}}\right)& = \int_{0}^{\theta_2} F_{|h_{\mathcal{S}\mathcal{R}}|^{2}}(\psi(x))f_{|h_{\mathcal{R}\mathcal{D}}|^{2}}(x)\,\mathrm{d}x \nonumber \\
 &+ \int_{\theta_2}^{\theta_3} \big[1- F_{|h_{\mathcal{S}\mathcal{R}}|^{2}}(\psi(x))\big]f_{|h_{\mathcal{R}\mathcal{D}}|^{2}}(x)\,\mathrm{d}x \nonumber \\
 &+ \int_{\theta_3}^{\infty} \big[1- F_{|h_{\mathcal{S}\mathcal{R}}|^{2}}(\psi(x))\big]f_{|h_{\mathcal{R}\mathcal{D}}|^{2}}(x)\,\mathrm{d}x \nonumber \\
 & =  \frac{1}{\lambda_{\mathcal{R}\mathcal{D}}}\int_{\theta_2}^{\theta_3}\exp\left(-\left(\frac{\psi(x)}{\lambda_{\mathcal{S}\mathcal{R}}} + \frac{x}{\lambda_{\mathcal{R}\mathcal{D}}}\right)\right)\,\mathrm{d}x \nonumber \\
 & + \exp\left(-\frac{\theta_3}{\lambda_{\mathcal{R}\mathcal{D}}}\right),
 \label{eq:ppos12}
\end{align}
where 
\begin{equation}
\psi(x) = \frac{N_0^2}{P(1-\beta)(\eta \beta P x^2 - N_0)} - x
\end{equation}
with
\begin{equation}
\label{eq:binary}
\psi(x)  \left\{
  \begin{array}{l l}
    < 0, & \quad 0 \leq x < \theta_2, \\
    \geq 0, & \quad \theta_2 \leq x \leq \theta_3,\\
  < 0, & \quad \theta_3 < x < \infty. \\
  \end{array} \right.
\end{equation}
$\theta_2$ is the positive root of the equation $g(x) = \eta \beta Px^{2} - N_0 = 0$, and is given as
\begin{equation}
\theta_2 = \sqrt{\frac{N_0}{\eta \beta P}},
\end{equation}
while $\theta_3$ is the real root of $\psi(x) = 0$ which is a cubic equation given as
\begin{equation}
x^{3} - \mathcal{A}x - \mathcal{B} = 0,
\label{eq:cubic}
\end{equation}
where $\mathcal{A} = \frac{N_0}{\eta \beta P}$ and $\mathcal{B} = \frac{N_0^2}{\eta \beta (1-\beta) P^2}$. We obtain the solution to \eqref{eq:cubic} using Cardano's formula~\cite{nathan}, which allows us to find the real root of \eqref{eq:cubic}. The solution is given as
\begin{align}
\theta_3 &= \left(\frac{\mathcal{B}}{2} + \sqrt{\left(\frac{\mathcal{B}}{2}\right)^{2} + \left(-\frac{\mathcal{A}}{3}\right)^{3}}\right)^{\frac{1}{3}} \nonumber \\
& + \left(\frac{\mathcal{B}}{2} - \sqrt{\left(\frac{\mathcal{B}}{2}\right)^{2} + \left(-\frac{\mathcal{A}}{3}\right)^{3}}\right)^{\frac{1}{3}}.
\end{align}
Substituting \eqref{eq:ppos12} in \eqref{eq:pos11}, we get the exact expression of the probability of positive secrecy rate given in \eqref{eq:sec_pos2}.

\subsection{Proof of \eqref{eq:sec_pos_apprx}}
Under high SNR approximation of $\gamma_{\mathcal{D}}$ given in \eqref{eq:snr_des_appr}, using \eqref{eq:pos11}, we can write the probability of positive secrecy rate as
\begin{align}
 P_{\mathrm{pos}} &= (1 - P_{\mathrm{p,out}}) \nonumber \\
 &\hspace*{-7mm}\times \mathbb{P}\left(\frac{\eta \beta(1-\beta)P|h_{\mathcal{S}\mathcal{R}}|^{2}|h_{\mathcal{R}\mathcal{D}}|^{2}}{N_0\left(\eta \beta |h_{\mathcal{R}\mathcal{D}}|^{2} + (1-\beta)\right)} > \frac{(1-\beta)P|h_{\mathcal{S}\mathcal{R}}|^{2}}{(1-\beta)P|h_{\mathcal{R}\mathcal{D}}|^{2} + N_0}\right),
 \label{eq:ppos_appr1}
\end{align}
where we have used $\gamma_{\mathcal{R}}$ from \eqref{eq:snr_rel} with $P_{\mathcal{S}}= P_{\mathcal{D}} = P$ and $h_{\mathcal{D}\mathcal{R}} = h_{\mathcal{R}\mathcal{D}}$ (channel reciprocity between relay and destination). Simplifying \eqref{eq:ppos_appr1}, we obtain
\begin{align}
 P_{\mathrm{pos}} &= (1 - P_{\mathrm{p,out}})\mathbb{P}\left(|h_{\mathcal{R}\mathcal{D}}|^{2} > \sqrt{\frac{N_0}{\eta \beta P}}\right) \nonumber \\
 &= (1 - P_{\mathrm{p,out}})\exp\left(-\sqrt{\frac{\theta_2}{\lambda^{2}_{\mathcal{R}\mathcal{D}}}}\right),
\end{align}
where $\theta_2 = \frac{N_0}{\eta \beta P}$.

\section{Proof of Proposition \ref{prop:esc}}
\label{app:esc}
For PS policy, we can write the ergodic secrecy rate as
\begin{align}
\bar{R}_{\mathrm{sec}} &=  (1-P_{\mathrm{p,out}}) \mathbb{E}\left\lbrace \frac{1}{2} \left[\log_2\left(\frac{1 + \gamma_{\mathcal{D}}}{1 + \gamma_{\mathcal{R}}}\right)\right]^{+} \right\rbrace \\
& \stackrel{\mathrm{(a)}}{\geq} (1-P_{\mathrm{p,out}}) \left[\mathbb{E}\left \lbrace \frac{1}{2} \log_2\left(\frac{1 + \gamma_{\mathcal{D}}}{1 + \gamma_{\mathcal{R}}}\right)   \right \rbrace\right]^{+} \nonumber \\
& \stackrel{\mathrm{(b)}}{=}(1-P_{\mathrm{p,out}}) \max\Bigg(\frac{1}{2 \ln(2)}\Bigg[\underbrace{\mathbb{E}\left \lbrace \ln\left(1 + \frac{XZ}{Z + 1}\right) \right \rbrace}_{T_1} \nonumber \\
&- \underbrace{\mathbb{E}\left \lbrace \ln\left(1 + \frac{X}{Y + 1}\right) \right \rbrace}_{T_2} \Bigg], 0\Bigg),
\label{eq:esc3}
\end{align}
where $X = \frac{(1-\beta)P|h_{\mathcal{S}\mathcal{R}}|^{2}}{N_0}$, $Y = \frac{(1-\beta)P|h_{\mathcal{R}\mathcal{D}}|^{2}}{N_0}$, and $Z = \frac{\eta \beta |h_{\mathcal{R}\mathcal{D}}|^{2}}{1-\beta}$ are the exponential random variables with means $m_x = \frac{(1-\beta)P\lambda_{\mathcal{S}\mathcal{R}}}{N_0}$, $m_y = \frac{(1-\beta)P \lambda_{\mathcal{R}\mathcal{D}}}{N_0}$, and $m_z = \frac{\eta \beta \lambda_{\mathcal{R}\mathcal{D}}}{1-\beta}$, respectively. The inequality ($\mathrm{a}$) is obtained by using the fact $\mathbb{E}\left \lbrace \max(U, V) \right \rbrace \geq \max \left(\mathbb{E}\left \lbrace U \right \rbrace, \mathbb{E}\left \lbrace V \right \rbrace\right)$. Also, to obtain equality ($\mathrm{b}$), we have used $\gamma_{\mathcal{R}}$ from \eqref{eq:snr_rel} and $\gamma_{\mathcal{D}}$ from \eqref{eq:snr_des_appr}. We can further lower bound $T_1$ as 
\begin{align}
T_1 &= \mathbb{E}\left \lbrace \ln\left(1 + \frac{XZ}{Z + 1}\right) \right \rbrace \nonumber \\
& = \mathbb{E}\left \lbrace \ln\left(1 + \exp\left(\ln\left(\frac{XZ}{Z + 1}\right)\right)\right) \right \rbrace \nonumber \\
& \stackrel{\mathrm{(c)}}{\geq}\ln\left(1 + \exp\left(\mathbb{E}\left \lbrace \ln\left( \frac{XZ}{Z + 1}\right)\right \rbrace \right)\right) \nonumber \\
& = \ln\left(1 + \exp\left(\underbrace{\mathbb{E}\left \lbrace \ln\left( XZ\right)\right \rbrace}_{\mathcal{J}_1} - \underbrace{\mathbb{E}\left \lbrace \ln\left( Z+1\right)\right \rbrace}_{\mathcal{J}_2} \right)\right),
\label{eq:T1}
\end{align}
where we have used the convexity of $\ln(1+t\exp(x))$ for $t > 0$ and Jensen's inequality to obtain inequality ($\mathrm{c}$). We write
\begin{align}
\mathcal{J}_1 &= \mathbb{E}\left \lbrace \ln\left( XZ\right)\right \rbrace = \int_{x = 0}^{\infty} \int_{z = 0}^{\infty} \ln(xz)f_{X}(x)f_{Z}(z)\, \mathrm{d}x \, \mathrm{d}z, \nonumber
\end{align}
which can be further be written in a compact form using~\cite[4.331.1]{grad} as
\begin{equation}
\mathcal{J}_1 = -2\phi - \ln\left(\frac{1}{m_x m_z}\right),
\label{eq:J1}
\end{equation}
where $\phi$ is the Euler's constant~\cite[9.73]{grad}. We can write $\mathcal{J}_2$ as
\begin{align}
\mathcal{J}_2 = \mathbb{E}\left \lbrace \ln\left( Z+1\right)\right \rbrace = \int_{z = 0}^{\infty} \ln(z+1)f_{Z}(z)\, \mathrm{d}z,
\end{align}
which we can write using \cite[4.337.2]{grad} as
\begin{equation}
\mathcal{J}_2 = -\exp\left(\frac{1}{m_z}\right)\mathrm{Ei}\left(-\frac{1}{m_z}\right),
\label{eq:J2}
\end{equation}
where $\mathrm{Ei}(x)$ is the exponential integral~\cite[8.21]{grad}. Substituting \eqref{eq:J1} and \eqref{eq:J2} in \eqref{eq:T1}, we get the required lower bound for $T_1$.

We can rewrite $T_2$ as
\begin{align}
T_2 = \mathbb{E}\left \lbrace \ln\left(1 + \gamma_{\mathcal{R}}\right) \right \rbrace = \int_{u = 0}^{\infty} \ln(1+u)f_{\gamma_{\mathcal{R}}}(u)\, \mathrm{d}u.
\label{eq:T21}
\end{align}
Using the integration by parts method, we can rewrite \eqref{eq:T21} as
\begin{align}
T_2 = \int_{u = 0}^{\infty}\frac{1}{1+u}\left[1 -F_{\gamma_{\mathcal{R}}}(u)\right] \mathrm{d}u,
\label{eq:T22}
\end{align}
where we can write the cumulative distribution function (CDF) $F_{\gamma_{\mathcal{R}}}(u)$ as
\begin{align}
F_{\gamma_{\mathcal{R}}}(u) &= \int_{y = 0}^{\infty} F_{X}((1+y)u)f_{Y}(y)\,\mathrm{d}y \nonumber\\
&= \frac{1}{m_y}\!\int_{y = 0}^{\infty} \left[1- \exp\left(-\frac{(1+y)u}{m_x}\right)\!\right]\exp\left(\!-\frac{y}{m_y}\right)\mathrm{d}y \nonumber \\
& = 1 - \frac{m_x}{m_x + u m_y}\exp\left(-\frac{u}{m_x}\right).
\label{eq:F}
\end{align}
Substituting \eqref{eq:F} in \eqref{eq:T22} and using \cite[3.353.3]{grad} and \cite[3.352.4]{grad}, we finally obtain the required expression for $T_2$ as in \eqref{eq:T2_fin}.
\bibliographystyle{ieeetr}
\bibliography{paper}

\end{document}